\newtheorem{theorem}{Theorem}
\newtheorem{example}{Example}
\newtheorem{definition}{Definition}
\def\psfancypar#1#2{\begingroup\def\par{\endgraf\endgroup\lineskiplimit=0pt}
               \setbox2=\hbox{\large\sc #2}
%              \showthe\lht2\showthe\baselineskip
               \newdimen\tmpht \tmpht \ht2 \advance\tmpht by \baselineskip
%              \showthe\tmpht
% Changed 10/2/89 rhr - t-bol -> Times-Bold for dvi2ps to dvips conversion
%              \font\hhuge=t-bol at \tmpht
               \font\hhuge=Times-Bold at \tmpht
               \setbox1=\hbox{{\hhuge #1}}
%              \showthe\ht1 
               \count7=\tmpht \count8=\ht1
%  note that all this BS is necessary since TeX only does integer
%  divides and rounds all results
               \divide\count8 by 1000 \divide\count7 by \count8 
%               \showthe\count7
               \tmpht=.001\tmpht\multiply\tmpht by \count7 
%               \showthe\tmpht
%              \font\hhuge=t-bol at \tmpht
               \font\hhuge=Times-Bold at \tmpht
               \setbox1=\hbox{{\hhuge #1}}
               \noindent
                \hangindent1.05\wd1
               \hangafter=-2 {\hskip-\hangindent
               \lower1\ht1\hbox{\raise1.0\ht2\copy1}%
                \kern-0\wd1}\copy2\lineskiplimit=-1000pt}
\newcommand{\beq}{\begin{equation}}
\newcommand{\eeq}{\end{equation}}
\newcommand{\bqa}{\begin{eqnarray}}
\newcommand{\eqa}{\end{eqnarray}}
\newcommand{\bqn}{\begin{eqnarray*}}
\newcommand{\eqn}{\end{eqnarray*}}
\newcommand{\nn}{\nonumber}
\newcommand{\be}{\begin{enumerate}}
\newcommand{\ee}{\end{enumerate}}
\newcommand{\bi}{\begin{itemize}}
\newcommand{\ei}{\end{itemize}}
\newcommand{\bd}{\begin{description}}
\newcommand{\ed}{\end{description}}
\newcommand{\ba}{\begin{array}}
\newcommand{\ea}{\end{array}}
\newcommand{\bde}{\begin{definition}}
\newcommand{\ede}{\end{definition}}
\newcommand{\bex}{\begin{example}}
\newcommand{\eex}{\end{example}}
\def\boxit#1{\vbox{\hrule\hbox{\vrule\kern3pt
        \vbox{\kern3pt#1\kern3pt}\kern3pt\vrule}\hrule}}
\def\reals{ { {\rm  I \kern-0.15em R }  } }
\def\complex{ {\,{{\rm C} \kern-0.50em \raise0.20ex {  |}}\, }}
\def\0bf{{\bf 0}}
\def\1bf{{\bf 1}}
\def\2bf{{\bf 2}}
\def\3bf{{\bf 3}}
\def\4bf{{\bf 4}}
\def\5bf{{\bf 5}}
\def\6bf{{\bf 6}}
\def\7bf{{\bf 7}}
\def\8bf{{\bf 8}}
\def\9bf{{\bf 9}}
\def\Rbf{{\bf R}}
\def\Rmat{\mathcal{R}}
\def\Rxx{\Rbf_{\ssstyle X\kern-.1em X}}
\let\ssstyle=\scriptscriptstyle
\def\Kout{\setbox1=\hbox{\Huge\bf K}\hbox to
1.05\wd1{\hspace{.05\wd1}% [arxiv_v2: inline-PS \special stripped, 290 chars]}}
\def\Sout{\setbox1=\hbox{\Huge\bf S}\hbox to 1.05\wd1{\hspace{.05\wd1}% [arxiv_v2: inline-PS \special stripped, 290 chars]}}

\def\scalefig#1{\epsfxsize #1\textwidth}

\date{April 27, 2008}

\begin{document}

\title{\bf \LARGE Capacity Bounds for Two-Hop Interference Networks}
\author{\authorblockN{Yi Cao and Biao Chen}
\authorblockA{Department of EECS, Syracuse University, Syracuse, NY 13244\\
Email: ycao01@syr.edu, bichen@syr.edu}} \maketitle
\begin{abstract}
This paper considers a two-hop interference network, where two
users transmit independent messages to their respective receivers
with the help of two relay nodes. The transmitters do {\em not}
have direct links to the receivers; instead, two relay nodes serve
as intermediaries  between the transmitters and receivers. Each
hop, one from the transmitters to the relays and the other from
the relays to the receivers, is modeled as a Gaussian interference
channel, thus the network is essentially a cascade of two
interference channels. For this network, achievable symmetric
rates for different parameter regimes under decode-and-forward
relaying and amplify-and-forward relaying are proposed and the
corresponding coding schemes are carefully studied. Numerical
results are also provided.
\end{abstract}

\section{Introduction}

The wireless mesh networks are being extensively studied recently
due to their potential to improve the performance and throughput
of the cellular networks by borrowing the features of ad-hoc
networks \cite{Akyildiz:05CN}. The two-hop interference network
was recently proposed to model the mesh network from an
information theoretic perspective \cite{Simeone_etal:07Allerton}.
The model is in essence a cascade of two interference channels:
the transmitters communicate to two relay nodes through an
interference channel and the two relay nodes communicate to the
two receivers through another interference channel.

In \cite{Simeone_etal:07Allerton}, the authors studied the
achievable region for the model where the relays apply
decode-and-forward scheme. For the interference channel in the
first hop, since the messages of the two users are independent,
the largest achievable region to date was proposed by Han and
Kobayashi \cite{Han&Kobayashi:81IT}. The basic idea is for each
user to split their message into two parts: the private message,
which is only to be decoded by the intended receiver, and the
common message, which is to be decoded by both receivers. Although
the unintended user's common message is discarded by the receivers
in the classic interference channel model,
\cite{Simeone_etal:07Allerton} made use of this common message at
the two relay nodes as knowledge of them can help boost the rate
in the second hop through cooperative transmission. In
\cite{Simeone_etal:07Allerton}, the authors proposed the
superposition coding scheme for each relay node to transmit not
only the intended user's private and common messages but also the
other user's common message, in order to obtain the coherent
combining gain of the common message at the intended receiver.

\cite{Thejaswi_etal:07Allerton} also considered the two-hop
interference network model. Instead of considering the end-to-end
transmission rate, the authors focused on the the second hop and
explored the possibilities for the two relays to utilize the
common message from the unintended user
 and proposed multiple transmission schemes, such as
MIMO broadcast strategy, dirty paper coding, beamforming, and
further rate splitting.

However, both \cite{Simeone_etal:07Allerton} and
\cite{Thejaswi_etal:07Allerton} only considered decode-and-forward
relaying and focused on the weak interference case for both hops,
i.e., the interference link gain is less than the direct link
gain. In this paper, we study the model under various parameter
regimes using decode-and-forward relaying as well as
amplify-and-forward relaying. \cite{Simeone_etal:07Allerton} and
\cite{Thejaswi_etal:07Allerton} also suggested that, if the
interference channel in the first hop has strong interference
(interference link gain greater than direct link gain), by the
standard results for the classic interference channel, it is
optimal for the two relays to decode both users messages. Contrary
to this, we will show in later sections that this approach can be
easily outperformed by switching the roles of the two relays which
essentially converts the strong interference channels to weak
interference channels. For amplify and forward, we demonstrate
that the end-to-end rate may exceed the naive use of cut-set bound
which applies to only the decode and forward approach.

The rest of the paper is organized as follows. In section II, we
introduce the model for the two-hop interference network. In
section III, we focus on the end-to-end transmission rate and
analyze the decode-and-forward relaying scheme for the network
under different parameter regimes. In section IV, we analyze the
amplify-and-forward scheme under various parameter regimes.
Section V provides numerical examples to compare various proposed
coding schemes. Concluding remarks are given in section VI.

\section{Channel Model}
The standard two-hop interference network is a cascade of two
interference channels with direct transmission link coefficient
equal to $1$, as shown in Fig. \ref{fig:standard twohop}.
\begin{figure}[htb]
\centerline{
\begin{psfrags}
\psfrag{W1}[l]{$W_1$}\psfrag{W2}[l]{$W_2$}
\psfrag{W11}[l]{$\hat{W}_1$}\psfrag{W22}[l]{$\hat{W}_2$}
\psfrag{X1}[l]{$X_1$} \psfrag{X2}[l]{$X_2$} \psfrag{X3}[l]{$X_3$}
\psfrag{X4}[l]{$X_4$} \psfrag{Y1}[l]{$Y_1$} \psfrag{Y2}[l]{$Y_2$}
\psfrag{Y3}[l]{$Y_3$} \psfrag{Y4}[l]{$Y_4$} \psfrag{Z1}[l]{$Z_1$}
\psfrag{Z2}[l]{$Z_2$} \psfrag{Z3}[l]{$Z_3$} \psfrag{Z4}[l]{$Z_4$}
\psfrag{T1}[l]{$T1$}\psfrag{T2}[l]{$T2$}
\psfrag{R1}[l]{$R1$}\psfrag{R2}[l]{$R2$}
\psfrag{D1}[l]{$D1$}\psfrag{D2}[l]{$D2$}
\psfrag{h1}[l]{$1$}\psfrag{h2}[l]{$a_2$}
\psfrag{h3}[l]{$a_1$}\psfrag{h4}[l]{$1$}
\psfrag{h5}[l]{$1$}\psfrag{h6}[l]{$b_2$}
\psfrag{h7}[l]{$b_1$}\psfrag{h8}[l]{$1$}
\scalefig{.50}\epsfbox{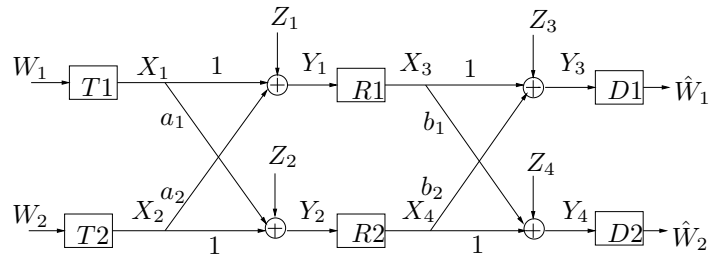}
\end{psfrags}
} \caption{\label{fig:standard twohop}Two-hop interference network
in standard form}
\end{figure}

Transmitter 1 ($T_1$) has message
$W_1\in\{1,2,\cdot\cdot\cdot,2^{nR_1}\}$ to be transmitted to
destination $D_1$ and transmitter 2 ($T_2$) has message
$W_2\in\{1,2,\cdot\cdot\cdot,2^{nR_2}\}$ to be transmitted to
destination $D_2$. $a_1$, $a_2$, $b_1$ and $b_2$ are fixed
positive numbers, $Z_1$, $Z_2$, $Z_3$ and $Z_4$ are independent
Gaussian distributed variables with zero mean and unit variance.
The average power constraints for the input signals $X_1$, $X_2$,
$X_3$ and $X_4$ are $P_{11}$, $P_{12}$, $P_{21}$ and $P_{22}$,
respectively.

In order to simplify the analysis of this complicated channel
model and better compare our results with the existing ones, we
follow the convention of \cite{Simeone_etal:07Allerton} and
\cite{Thejaswi_etal:07Allerton} by only considering the symmetric
interference channels, i.e., \bqa
a_1=a_2&\triangleq& a\label{eq:model5}\\
b_1=b_2&\triangleq& b\\
P_{11}=P_{12}&\triangleq& P_1\\
P_{21}=P_{22}&\triangleq& P_2\label{eq:model8} \eqa In addition,
we focus primarily on the symmetric rate, i.e., the case with
$R_1=R_2$.
\section{Decode and Forward}
In this section, we propose capacity bounds for the two-hop
interference network in various parameter regimes using
decode-and-forward relaying. Under the full duplex condition, the
transmission is conducted across a large number of blocks. In each
block, the relays receive the new messages of the current block
from the transmitters, and transmit the information of the
previous block to the desitnation. We assume the number of blocks
is large enough to ignore the penalty incurred in the first and
the last blocks.

\subsection{$0<a<1, 0<b<1$}\label{subsection:case1}
In \cite{Simeone_etal:07Allerton}, the authors proposed achievable
transmission rates for the case that both hops have weak
interference, i.e., $a<1$ and $b<1$. Specifically, they applied
Han-Kobayashi's scheme to the first hop by splitting each user's
message into two parts, namely, $W_1$ into private message
$W_{1p}\in \{1, \cdot\cdot\cdot, 2^{nR_{1p}}\}$ and common message
$W_{1c}\in \{1, \cdot\cdot\cdot, 2^{nR_{1c}}\}$ and $W_2$ into
private message $W_{2p}\in \{1, \cdot\cdot\cdot, 2^{nR_{2p}}\}$
and common message $W_{2c}\in \{1, \cdot\cdot\cdot,
2^{nR_{2c}}\}$. Each relay not only decodes the private and common
messages from the intended user, but also decodes the common
message from the other user. Since the Han-Kobayashi region is
based on simultaneous decoding of the three messages (1 private
message and 2 common messages), which is very complicated to
compute, \cite{Thejaswi_etal:07Allerton} simplified it by
proposing sequential decoding: each relay first decodes the two
common messages, subtract them out, then decode the private
message. By restricting the analysis to the symmetric rate
\cite{Thejaswi_etal:07Allerton}, i.e., $R_{1p}=R_{2p}=R_p^{(1)}$,
$R_{1c}=R_{2c}=R_c^{(1)}$, we have achievable rates in the first
hop \bqa R_p^{(1)}\!\!\!\!\!&=&\!\!\!\!\!\gamma\left(\frac{\alpha
P_1}{1+a^2\alpha P_1}\right)\label{eq:1}\\
R_c^{(1)}\!\!\!\!\!&=&\!\!\!\!\!\min\left\{\gamma\left(\frac{a^2\bar{\alpha}P_1}{\sigma_1^2}\right),
\frac{1}{2}\gamma\left(\frac{(1+a^2)\bar{\alpha}P_1}{\sigma_1^2}\right)\right\}\label{eq:2}
\eqa where $\alpha P_1$ is the power allocated to the private
message and $\bar{\alpha}P_1=(1-\alpha)P_1$ is the power allocated
to the common message. $\sigma_1^2=1+(1+a^2)\alpha P_1$.
$\gamma(x)$ is defined as $\frac{1}{2}\log(1+x)$. The superscript
``(1)" denotes the first hop. (\ref{eq:2}) is from the capacity
region of the MAC channel consisting of the two common messages,
treating the private messages as noise; (\ref{eq:1}) is the
decoding of the private message treating the other private message
as noise.

For the second hop, \cite{Simeone_etal:07Allerton} proposed
superposition scheme at the two relays such that coherent
combining can be achieved at the destinations. This scheme was
outperformed by the dirty paper coding (DPC) scheme proposed in
\cite{Thejaswi_etal:07Allerton} for the very weak interference
case, i.e., when $b$ is very small. The idea is for the two relays
to encode one of the common messages using DPC, thus treating the
other common message as known interference. Therefore, this known
interference will not affect the unintended destination. However,
due to the nonlinearity of the DPC, the dirty paper decoded common
message cannot be subtracted out. Thus,
\cite{Thejaswi_etal:07Allerton} also suggested to dirty paper code
the private message treating both common messages as known
interference. Besides, the common message that is treated as known
interference is decoded at its intended destination by treating
the other common message (dirty paper coded) as well as the two
private messages as noise. Since either common message can be
dirty paper coded against the other common message, there are two
transmission modes and one should time share between them to
maximize the sum rate \cite{Thejaswi_etal:07Allerton}. Again, by
only considering the symmetric rates, the achievable rates under
the DPC scheme for the second hop are
\cite{Thejaswi_etal:07Allerton} \bqa R_{p,
DPC}^{(2)}&=&\gamma\left(\frac{\beta
P_2}{1+b^2\beta P_2}\right)\label{eq:3}\\
R_{c,DPC}^{(2)}&=&\frac{1}{2}\gamma\left(\frac{(1-b^2)^2\bar{\beta}^2P_2^2}{\sigma_2^4}+\frac{2(1+b^2)\bar{\beta}P_2}{\sigma_2^2}\right)\label{eq:4}
\eqa where $\beta P_2$ is the power allocated to the private
message, $\sigma_2^2=1+(1+b^2)\beta P_2$ since the private
messages from both users are treated as noise when decoding common
messages. (\ref{eq:3}) is decoding the private message treating
the other user's private message as noise, since the effect of the
two common messages disappears due to the DPC; (\ref{eq:4}) is
from the optimization problem which maximizes the sum rate of the
two common messages.

In the DPC scheme, the common message that is treated as known
interference is decoded by its intended receiver treating the
other user's common message and private messages as noise.
However, when the interference link of the second hop gets
stronger, i.e., $b$ gets larger, the interference incurred by the
common message and private message from the other user may be too
strong to be treated as noise. Therefore, it may be beneficial for
the receivers to decode the common message and even the private
message from the other user, like in the strong interference
channel, whose capacity is that of the compound MAC. To make the
coding scheme more general, we do not let the receivers decode all
the private messages. Instead, we further split the private
message $W_{1p}$ from the first hop into two parts,
$W_{1pp}\in\{1,2,\cdot\cdot\cdot,2^{nR_{1pp}}\}$ and
$W_{1pc}\in\{1,2,\cdot\cdot\cdot,2^{nR_{1pc}}\}$, where $W_{1pp}$
is the sub-private message only decoded at the intended receiver,
and $W_{1pc}$ is the sub-common message decoded at both receivers.
The private message $W_{2p}$ is split in the same fashion into
$W_{2pp}$ and $W_{2pc}$. There are five messages (two common
messages, two sub-common messages and one sub-private message) to
be decoded by each receiver, which yields very complex expression
for the rate region if we use simultaneous decoding. Instead, we
will adopt sequential decoding and fix the decoding order as
follows: first, simultaneously decode the two common messages
$W_{1c}$ and $W_{2c}$, subtract them out; second, simultaneously
decode the two sub-common messages $W_{1pc}$ and $W_{2pc}$,
subtract them out; third, decode the sub-private message $W_{1pp}$
by receiver 1 (or $W_{2pp}$ by receiver 2). Consequently, the
symmetric achievable rate region is \bqa
R_c&\leq&\gamma\left(\frac{(\sqrt{P_{c1}}+b\sqrt{P_{c2}})^2}{1+(1+b^2)P_p}\right)\\
R_c&\leq&\gamma\left(\frac{(\sqrt{P_{c2}}+b\sqrt{P_{c1}})^2}{1+(1+b^2)P_p}\right)\\
2R_c&\leq&\gamma\left(\frac{(\sqrt{P_{c1}}+b\sqrt{P_{c2}})^2+(\sqrt{P_{c2}}+b\sqrt{P_{c1}})^2}{1+(1+b^2)P_p}\right)\\
R_{pc}&\leq&\gamma\left(\frac{P_{pc}}{1+(1+b^2)P_{pp}}\right)\\
R_{pc}&\leq&\gamma\left(\frac{b^2P_{pc}}{1+(1+b^2)P_{pp}}\right)\\
2R_{pc}&\leq&\gamma\left(\frac{(1+b^2)P_{pc}}{1+(1+b^2)P_{pp}}\right)\\
R_{pp}&\leq&\gamma\left(\frac{P_{pp}}{1+b^2P_{pp}}\right)
 \eqa
where power $P_p$ is allocated to the private message, $P_{c1}$ is
allocated to the intended common message, $P_{c2}$ is allocated to
the interfering common message, and $P_p+P_{c1}+P_{c2}=P_2$. Also,
$P_{pc}$ is for the sub-common message and $P_{pp}$ is for the
sub-private message and $P_{pc}+P_{pp}=P_p$. If we fix $P_p$ and
maximize $R_c$ under $P_{c1}+P_{c2}\leq P_2-P_p$, the optimal
$R_c^*=\frac{1}{2}\gamma\left(\frac{(1+b)^2(P_2-P_p)}{1+(1+b^2)P_p}\right)$
is achieved when $P_{c1}=P_{c2}=\frac{1}{2}(P_2-P_p)$
\cite{Thejaswi_etal:07Allerton}.  Therefore, the symmetric rates
for the second hop under the MAC scheme is \bqa\nn
R_{p,MAC}^{(2)}\!\!\!\!&=&\!\!\!\!\max_{\alpha}\left\{\min\left[\gamma\left(\frac{b^2\bar{\alpha}\beta
P_2}{\sigma_3^2}\right),
\frac{1}{2}\gamma\left(\frac{(1+b^2)\bar{\alpha}\beta
P_2}{\sigma_3^2}\right)\right]\right.\\
&&+\left.\gamma\left(\frac{\alpha\beta P_2}{1+b^2\alpha\beta
P_2}\right)\right\}\label{eq:3(2)}\\
R_{c,MAC}^{(2)}\!\!\!\!&=&\!\!\!\!\frac{1}{2}\gamma\left(\frac{(1+b)^2\bar{\beta}P_2}{1+(1+b^2)\beta
P_2}\right)\label{eq:4(2)}
 \eqa
where $\sigma_3^2=1+(1+b^2)\alpha\beta P_2$ and $\alpha, \beta \in
[0,1]$.

This scheme is more general than the cooperative transmission
scheme in \cite{Simeone_etal:07Allerton} in that we further split
the first hop's private messages into two parts in the second hop.
This scheme is similar to the ``layered coding with beamforming"
scheme in \cite{Thejaswi_etal:07Allerton}, with the difference
that we only consider the coherent beamforming here and disregard
the zero forcing beamforming scheme which proves to be always
worse than the DPC scheme.

\begin{theorem}\label{thm:1}
The achievable symmetric rate ($R_1=R_2=R$) for the symmetric
interference network is the solution to the following optimization
problem: \bqa R&=&\max_{\alpha,\beta\in [0,1]} R_p+R_c\\
&&\mbox{s.t.} (R_p,R_c)\in \Rmat(R_p^{(1)},R_c^{(1)})\cap
\Rmat(R_p^{(2)},R_c^{(2)})\eqa where $R_p^{(1)}$ and $R_c^{(1)}$
are given in (\ref{eq:1})-(\ref{eq:2}).
$\Rmat(R_p^{(2)},R_c^{(2)})$ is defined as the convex closure of
the union of $\Rmat(R_{p,DPC}^{(2)}, R_{c,DPC}^{(2)})$ and
$\Rmat(R_{p,MAC}^{(2)}, R_{c,MAC}^{(2)})$, where $R_{p,DPC}^{(2)}$
and $R_{c,DPC}^{(2)}$ are given in (\ref{eq:3})-(\ref{eq:4}), and
$R_{p,MAC}^{(2)}$ and $R_{c,MAC}^{(2)}$ are given in
(\ref{eq:3(2)})-(\ref{eq:4(2)}).
\end{theorem}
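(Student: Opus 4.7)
The plan is to establish the theorem by constructing a block-Markov decode-and-forward scheme that matches the claimed region, and then arguing that the end-to-end symmetric rate is limited by what both hops can simultaneously support for the same common/private split $(R_p,R_c)$. I would work over $B$ blocks so that boundary losses are negligible, fix the power-split parameters $\alpha,\beta\in[0,1]$, and in each block have the relays decode the current block's messages while re-encoding and transmitting the previous block's messages.

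For the first hop, I would use the standard Han--Kobayashi construction with Gaussian codebooks of powers $\alpha P_1$ (private) and $\bar\alpha P_1$ (common), and invoke the sequential decoding of \cite{Thejaswi_etal:07Allerton}: each relay first jointly decodes the two common messages as a two-user Gaussian MAC while treating private messages as noise, yielding the two constraints summarized in (\ref{eq:2}); it then subtracts them and decodes the intended private message treating the other private message as noise, yielding (\ref{eq:1}). Standard joint-typicality arguments give vanishing error probability, and symmetry of the channel together with the $R_{1p}=R_{2p}$, $R_{1c}=R_{2c}$ restriction makes $\Rmat(R_p^{(1)},R_c^{(1)})$ the correct first-hop region.

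For the second hop, I would establish the two sub-regions separately and then take their convex closure via time-sharing. The DPC region follows from \cite{Thejaswi_etal:07Allerton}: the two relays coherently transmit both common messages, dirty-paper-code the private message against them, and have each receiver decode the intended common treating the DPC-encoded other common and both private messages as noise; the optimization over which common to dirty-paper-code yields (\ref{eq:4}), while the private-message decoding reduces to (\ref{eq:3}). For the MAC sub-region, I would verify the seven displayed inequalities by interpreting the five-message sequential decoder (common $\to$ sub-common $\to$ sub-private) as a sequence of compound-MAC and single-user decoders, with each layer's lower layers acting as Gaussian noise; the symmetric allocation $P_{c1}=P_{c2}=(P_2-P_p)/2$ collapses the three common-message constraints to (\ref{eq:4(2)}), and combining the sub-common MAC bound with the sub-private single-user bound under $P_{pc}+P_{pp}=P_p$ produces (\ref{eq:3(2)}) after optimization over the inner split parameter $\alpha$.

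Finally, since decode-and-forward requires every bit delivered across the first hop to be re-delivered across the second hop, the end-to-end rate pair $(R_p,R_c)$ must lie in the intersection $\Rmat(R_p^{(1)},R_c^{(1)})\cap\Rmat(R_p^{(2)},R_c^{(2)})$, and the symmetric end-to-end rate is $R=R_p+R_c$ maximized over $(\alpha,\beta)$, which is exactly the stated optimization. The main obstacle I anticipate is the clean bookkeeping of the five-message MAC decoder in the second hop: keeping track of which messages are still present as interference at each stage, showing that the compound-MAC cross-term $(\sqrt{P_{c1}}+b\sqrt{P_{c2}})^2$ arises from coherent combining at each receiver, and justifying that the symmetric power allocation is indeed rate-maximizing under the symmetric-rate restriction. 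Once that is in place, the theorem is simply the statement that the DF end-to-end region is the intersection of the per-hop achievable regions, optimized over the free parameters.
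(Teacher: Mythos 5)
Your proposal is correct and follows essentially the same route as the paper: the theorem is proved there by exactly the derivation you describe --- Han--Kobayashi with sequential decoding for the first hop giving (\ref{eq:1})--(\ref{eq:2}), the DPC and five-message MAC schemes for the second hop giving (\ref{eq:3})--(\ref{eq:4}) and (\ref{eq:3(2)})--(\ref{eq:4(2)}) with the symmetric allocation $P_{c1}=P_{c2}=(P_2-P_p)/2$, followed by time-sharing (convex closure) and the observation that in block-Markov full-duplex operation the pair $(R_p,R_c)$ must be supportable by both hops simultaneously, hence the intersection. No substantive difference from the paper's argument.
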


\subsection{$a>1,b>1$}\label{subsection:case2}
If the first hop has strong interference, i.e., $a>1$, both
\cite{Simeone_etal:07Allerton} and \cite{Thejaswi_etal:07Allerton}
let both relays decode both users' messages in the first hop, as
this is the optimal scheme for interference channels with strong
interference. Using this scheme, for the symmetric rates
$(R_1=R_2=R^{(1)})$, we have \bqa R^{(1)}&\leq& \gamma(P_1)\\
R^{(1)}&\leq& \gamma(a^2P_1)\\
R^{(1)}+R^{(1)}&\leq& \gamma(P_1+a^2P_1) \eqa Thus, \bqa
R^{(1)}=\min\left(\gamma(P_1),
\frac{1}{2}\gamma((1+a^2)P_1)\right) \eqa In other words, for very
strong interference case $a^2\geq 1+P_1$, $R^{(1)}=\gamma(P_1)$;
for $1<a^2<1+P_1$, $R^{(1)}=\frac{1}{2}\gamma((1+a^2)P_1)$.

After the first hop, since both relays have knowledge of both
users' messages, the second hop reduces to the Gaussian vector
broadcast channel with per antenna power constraint, for which we
know the DPC scheme is optimal. By time sharing between the two
DPC modes and maximizing the sum rate, we obtain the achievable
symmetric rate for the second hop \bqa
R^{(2)}=\frac{1}{2}\gamma((b^2-1)^2P_2^2+2P_2(1+b^2)). \eqa
Therefore the achievable rate for the entire network is \bqa
R=\min\{R^{(1)}, R^{(2)}\}. \label{eq:5} \eqa

The above analysis seems to be a natural way to deal with the
strong interference case, and for each hop, the transmission
scheme is optimal. However, optimality in each hop does not
guarantee optimality of the entire network. Indeed, for the entire
system, the combination of the two optimal schemes is no longer
optimal. An easy way to outperform the above scheme is to switch
the role of the two relays. Specifically, we make relay $R_2$ as
the ``intended" relay for the first user $T_1$, and relay $R_1$ as
the intended relay for the second user $T_2$. In this way, the
first hop is converted into an interference channel with weak
interference. Consequently, the second hop is converted into
another weak interference channel as shown in
Fig.\ref{fig:transform}. After some simple scaling, this two-hop
network becomes \bqa Y_1^{'}&=&X_1+\frac{1}{a}X_2+Z_1^{'}\\
Y_2^{'}&=&\frac{1}{a}X_1+X_2+Z_2^{'}\\
Y_3^{'}&=&X_3+\frac{1}{b}X_4+Z_3^{'}\\
Y_4^{'}&=&\frac{1}{b}X_3+X_4+Z_4^{'} \eqa where $Z_1^{'},
Z_2^{'}\sim N(0, 1/a^2)$, $Z_3^{'}, Z_4^{'}\sim N(0, 1/b^2)$ are
independent.
\begin{figure}[htb]
\centerline{
\begin{psfrags}
\psfrag{W1}[l]{$W_1$}\psfrag{W2}[l]{$W_2$}
\psfrag{W11}[l]{$\hat{W}_1$}\psfrag{W22}[l]{$\hat{W}_2$}
\psfrag{X1}[l]{$X_1$} \psfrag{X2}[l]{$X_2$} \psfrag{X3}[l]{$X_4$}
\psfrag{X4}[l]{$X_3$} \psfrag{Y1}[l]{$Y_2$} \psfrag{Y2}[l]{$Y_1$}
\psfrag{Y3}[l]{$Y_3$} \psfrag{Y4}[l]{$Y_4$} \psfrag{Z1}[l]{$Z_2$}
\psfrag{Z2}[l]{$Z_1$} \psfrag{Z3}[l]{$Z_3$} \psfrag{Z4}[l]{$Z_4$}
\psfrag{T1}[l]{$T1$}\psfrag{T2}[l]{$T2$}
\psfrag{R1}[l]{$R2$}\psfrag{R2}[l]{$R1$}
\psfrag{D1}[l]{$D1$}\psfrag{D2}[l]{$D2$}
\psfrag{h1}[l]{$a$}\psfrag{h2}[l]{$1$}
\psfrag{h3}[l]{$1$}\psfrag{h4}[l]{$a$}
\psfrag{h5}[l]{$b$}\psfrag{h6}[l]{$1$}
\psfrag{h7}[l]{$1$}\psfrag{h8}[l]{$b$}
\scalefig{.50}\epsfbox{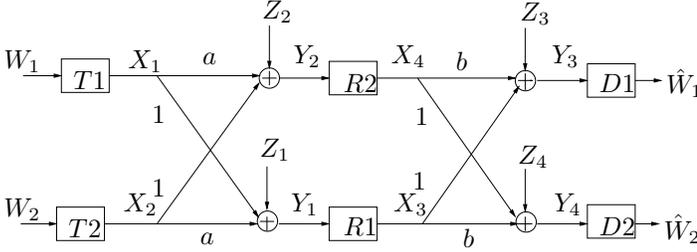}
\end{psfrags}
} \caption{\label{fig:transform}Two-hop interference network
transformation}
\end{figure}

Therefore, this strong interference two-hop network reduces to
case \ref{subsection:case1} where both hops are weak interference
channels. Using Han-Kobayashi scheme in the first hop and
combining DPC and MAC in the second hop, and going through the
same derivation, we obtain the symmetric rates in the first hop
\bqa R_p^{(1)}&=&\gamma\left(\frac{a^2\alpha P_1}{1+\alpha
P_1}\right)\label{eq:6(1)}\\
R_c^{(1)}&=&\min\left\{\gamma(\frac{\bar{\alpha}P_1}{\sigma_1^2}),
\frac{1}{2}\gamma\left(\frac{(1+a^2)\bar{\alpha}P_1}{\sigma_1^2}\right)\right\}\label{eq:6(2)}
\eqa where $\alpha\in [0,1]$ and $\sigma_1^2=1+(1+a^2)\alpha P_1$.

The symmetric rates in the second hop under DPC is \bqa
R_{p,DPC}^{(2)}&=&\gamma\left(\frac{b^2\beta P_2}{1+\beta P_2}\right)\label{eq:6(3)}\\
R_{c,DPC}^{(2)}&=&\frac{1}{2}\gamma\left(\frac{(b^2-1)^2\bar{\beta}^2P_2^2}{\sigma_2^4}+\frac{2(1+b^2)\bar{\beta}P_2}{\sigma_2^2}\right)\label{eq:6(4)}
\eqa where $\beta\in [0,1]$ and $\sigma_2^2=1+(1+b^2)\beta P_2$.

The symmetric rates in the second hop under MAC is \bqa\nn
R_{p,MAC}^{(2)}\!\!\!\!&=&\!\!\!\!\max_{\alpha}\left\{\min\left[\gamma\left(\frac{\bar{\alpha}\beta
P_2}{\sigma_3^2}\right),
\frac{1}{2}\gamma\left(\frac{(1+b^2)\bar{\alpha}\beta
P_2}{\sigma_3^2}\right)\right]\right.\\
&&+\left.\gamma\left(\frac{b^2\alpha\beta P_2}{1+\alpha\beta P_2}\right)\right\}\label{eq:6(5)}\\
R_{c,MAC}^{(2)}\!\!\!\!&=&\!\!\!\!\frac{1}{2}\gamma\left(\frac{(1+b)^2\bar{\beta}P_2}{1+(1+b^2)\beta
P_2}\right)\label{eq:6(6)}
 \eqa where $\sigma_3^2=1+(1+b^2)\alpha\beta P_2$ and $\alpha, \beta \in
[0,1]$.

\begin{theorem}\label{thm:2}
The solution to the following optimization problem is achievable
for the two-hop network when $a>1$ and $b>1$:
\bqa R&=&\max_{\alpha,\beta\in [0,1]} R_p+R_c \label{eq:6}\\
&&\mbox{s.t.} (R_p,R_c)\in \Rmat(R_p^{(1)},R_c^{(1)})\cap
\Rmat(R_p^{(2)},R_c^{(2)})\eqa where $R_p^{(1)}$ and $R_c^{(1)}$
are given in (\ref{eq:6(1)})-(\ref{eq:6(2)}).
$\Rmat(R_p^{(2)},R_c^{(2)})$ is defined as the convex closure of
the union of $\Rmat(R_{p,DPC}^{(2)}, R_{c,DPC}^{(2)})$ and
$\Rmat(R_{p,MAC}^{(2)}, R_{c,MAC}^{(2)})$, where $R_{p,DPC}^{(2)}$
and $R_{c,DPC}^{(2)}$ are given in
(\ref{eq:6(3)})-(\ref{eq:6(4)}), and $R_{p,MAC}^{(2)}$ and
$R_{c,MAC}^{(2)}$ are given in (\ref{eq:6(5)})-(\ref{eq:6(6)}).
\end{theorem}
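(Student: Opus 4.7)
The plan is to reduce the strong-interference case $(a>1,b>1)$ to the weak-interference case of Theorem~\ref{thm:1}. The key observation, already motivated in the text preceding the theorem, is that the assignment of which physical relay acts as the ``intended'' intermediate node for each transmitter--destination pair is a design choice. Declaring $R_2$ the intended relay for $T_1$ and $R_1$ the intended relay for $T_2$ turns the first-hop channel into one with direct gain $a$ and cross gain $1$; dividing each received signal by $a$ then gives direct gain $1$, cross gain $1/a<1$, and noise variance $1/a^2$, and likewise the second hop becomes a weak-interference channel with cross gain $1/b<1$ and noise variance $1/b^2$, as displayed in Fig.~\ref{fig:transform}. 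Since per-node scaling is invertible and the power constraints on $X_1,\dots,X_4$ are unchanged, the capacity of the relabeled/scaled network coincides with that of the original.

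Once the reduction is in place, the achievability scheme of Theorem~\ref{thm:1} applies verbatim to the transformed network: Han--Kobayashi splitting with sequential decoding at the relays in the first hop, and, in the second hop, the convex closure of the DPC region and the MAC (further-private-splitting) region obtained by time sharing. The explicit rate formulas follow by routine substitution into (\ref{eq:1})--(\ref{eq:2}), (\ref{eq:3})--(\ref{eq:4}), and (\ref{eq:3(2)})--(\ref{eq:4(2)}): each occurrence of the former cross gain $a$ becomes $1/a$, each SINR denominator absorbs the new noise variance $1/a^2$ (respectively $1/b^2$), and clearing fractions by $a^2$ (respectively $b^2$) produces precisely the numerators $a^2\alpha P_1$, $\bar\alpha P_1$, $b^2\beta P_2$, $(b^2-1)^2\bar\beta^2 P_2^2$, and $b^2\alpha\beta P_2$ appearing in (\ref{eq:6(1)})--(\ref{eq:6(6)}).

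With both $\Rmat(R_p^{(1)},R_c^{(1)})$ and $\Rmat(R_p^{(2)},R_c^{(2)})$ thus established for the transformed network, the end-to-end achievability follows from the standard block-Markov decode-and-forward argument alluded to at the start of Section~III: any symmetric $(R_p,R_c)$ lying in the intersection of the two regions is reliably transported across both hops with vanishing error as the number of blocks grows, and maximizing $R_p+R_c$ over $\alpha,\beta\in[0,1]$ yields the optimization in (\ref{eq:6}).

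The main conceptual point---and the only genuinely new ingredient relative to Theorem~\ref{thm:1} and to \cite{Simeone_etal:07Allerton,Thejaswi_etal:07Allerton}---is the legitimacy of the relay swap. One must check that relabeling does not violate the problem definition: messages $W_1,W_2$ still originate at $T_1,T_2$ and must terminate at $D_1,D_2$, respectively. Because the relays are passive intermediate nodes with no externally mandated message assignment, any permutation of which relay decodes which user's data is an admissible coding strategy, so the swap incurs no cost and is the step doing the real work. Everything after that is a mechanical substitution into the proof of Theorem~\ref{thm:1}.
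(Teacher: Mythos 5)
Your proposal is correct and follows essentially the same route as the paper: Section III.B establishes Theorem~\ref{thm:2} precisely by switching the roles of the two relays, rescaling so that both hops become weak-interference channels with cross gains $1/a$, $1/b$ and noise variances $1/a^2$, $1/b^2$ (Fig.~\ref{fig:transform}), and then applying the Han--Kobayashi plus DPC/MAC machinery of Theorem~\ref{thm:1} to obtain (\ref{eq:6(1)})--(\ref{eq:6(6)}). Your substitution bookkeeping matches the paper's formulas, and your explicit justification of the relay swap's legitimacy is a point the paper leaves implicit.
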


Note that when $\alpha=\beta=0$, let
$\Rmat(R_p^{(2)},R_c^{(2)})=\Rmat(R_{p,DPC}^{(2)},
R_{c,DPC}^{(2)})$, the rate $R$ defined in (\ref{eq:6}) reduces to
that of (\ref{eq:5}). Since $\Rmat(R_p^{(2)},R_c^{(2)})$ is always
a superset of $\Rmat(R_{p,DPC}^{(2)}, R_{c,DPC}^{(2)})$, the
achievable rate (\ref{eq:5}) is always a subset of (\ref{eq:6}).

\subsection{$0<a<1, b>1$}\label{subsection:case3}
For the first hop, it is a weak interference channel, the
transmission strategy is the same as case \ref{subsection:case1}:
the Han-Kobayashi scheme. Thus, the symmetric achievable rate is
$(R_p^{(1)}, R_c^{(1)})$ given in (\ref{eq:1})-(\ref{eq:2}).

For the second hop, we can still use DPC scheme, thus yielding
rates $(R_{p,DPC}^{(2)}, R_{c,DPC}^{(2)})$ given in
(\ref{eq:3})-(\ref{eq:4}). Now consider the MAC scheme. From the
standard result of strong interference channel, the capacity is
achieved when both user's messages are decoded by both receivers,
as in the case of compound MAC. Thus, for the MAC scheme proposed
in section \ref{subsection:case1}, we should modify it by letting
both receivers decode all the messages, both private and common,
instead of further splitting the private message. As such, we
should set $\alpha=0$ in (\ref{eq:3(2)})-(\ref{eq:4(2)}). Also
notice that $b>1$, the symmetric achievable rates for the MAC
scheme become \bqa R_{p,MAC}^{(2)}&=&\min\left\{\gamma(\beta P_2),
\frac{1}{2}\gamma((1+b^2)\beta P_2)\right\}\label{eq:6(7)}\\
R_{c,MAC}^{(2)}&=&\frac{1}{2}\gamma\left(\frac{(1+b)^2\bar{\beta}P_2}{1+(1+b^2)\beta
P_2}\right)\label{eq:6(8)}
 \eqa

Therefore, for the case $0<a<1, b>1$, the symmetric achievable
rate for the two hop network has the same form of that in Theorem
\ref{thm:1}, except that $R_{p,MAC}^{(2)}$ and $R_{c,MAC}^{(2)}$
are given in (\ref{eq:6(7)})-(\ref{eq:6(8)}).

\subsection{$a>1, 0<b<1$}
If we stick to the roles of the two relays, for the first hop, the
two relays should decode both users' messages; for the second hop,
we apply DPC scheme for the weak interference channel. However,
similar to case \ref{subsection:case2}, it can be verified that
this scheme is easily outperformed if we switch the role of the
two relays. Consequently, the first hop becomes a weak
interference channel and the second hop becomes a strong
interference channel. We can directly apply the results from case
\ref{subsection:case3}, with only minor modifications: change the
channel gains $a$ and $b$ into $\frac{1}{a}$ and $\frac{1}{b}$
respectively, and change the variance of noise $Z_1$ and $Z_2$ to
$\frac{1}{a^2}$, and change the variance of noise $Z_3$ and $Z_4$
to $\frac{1}{b^2}$. Thus, the total symmetric rate of the two hop
network becomes the same form of that in Theorem \ref{thm:2}
except that $R_{p,MAC}^{(2)}$ and $R_{c,MAC}^{(2)}$ are given in
(\ref{eq:6(9)})-(\ref{eq:6(10)}). \bqa
R_{p,MAC}^{(2)}&=&\min\left\{\gamma(b^2\beta P_2),
\frac{1}{2}\gamma((1+b^2)\beta P_2)\right\}\label{eq:6(9)}\\
R_{c,MAC}^{(2)}&=&\frac{1}{2}\gamma\left(\frac{(1+b)^2\bar{\beta}P_2}{1+(1+b^2)\beta
P_2}\right)\label{eq:6(10)}
 \eqa

For the second hop, the DPC scheme and the MAC scheme are both
needed for all the parameter regimes. Neither scheme can dominate
the other.

From the previous analysis of the four parameter regimes, we have
the following theorem.
\begin{theorem}\label{thm:role_switching}
For the two hop interference network with the transmission scheme
of decode and forward relaying, if the first hop has weak
interference, one should apply the HK scheme directly; if the
first hop has strong interference, it is always favorable to
convert it into a weak interference channel by switching the roles
of the two relays, as in Fig. \ref{fig:transform}, and then apply
the HK scheme. In other words, with strong interference in the
first hop, rate splitting after role switching of the two relays
can always achieve a rate region no smaller than that achieved by
both relays decoding all the messages without role switching.
\end{theorem}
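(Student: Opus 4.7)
The plan is to prove the inclusion claim by showing that the ``both-relays-decode-all-messages-without-switching'' rate region sits inside the role-switched HK rate region as the specific parameter choice $\alpha=\beta=0$. The first half of the theorem (weak first hop $\Rightarrow$ use HK directly) is essentially a statement about scheme specification and is just Section~\ref{subsection:case1} by definition, so the substantive content is the domination claim in the two strong-first-hop sub-cases, namely Section~\ref{subsection:case2} and the $a>1,\,0<b<1$ regime.

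For the first hop after role switching, I will evaluate the HK rates~(\ref{eq:6(1)})--(\ref{eq:6(2)}) at $\alpha=0$. This kills the private-message rate and reduces $R_c^{(1)}$ to $\min\bigl\{\gamma(P_1),\,\frac{1}{2}\gamma((1+a^2)P_1)\bigr\}$, which is exactly the rate $R^{(1)}$ that Section~\ref{subsection:case2} obtains by having both relays decode both users' messages in the original strong-interference channel. Conceptually, with $\alpha=0$ the role-switched HK scheme degenerates into a symmetric compound MAC at the relays, and a short check shows that the compound MAC region of the scaled role-switched channel (desired gain $a$, cross gain $1$, unit noise) coincides with the compound MAC region of the original channel (desired gain $1$, cross gain $a$, unit noise), because swapping the two per-relay MAC bounds leaves the intersection invariant.

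For the second hop, I will evaluate the DPC component of $\Rmat(R_p^{(2)},R_c^{(2)})$, namely (\ref{eq:6(3)})--(\ref{eq:6(4)}), at $\beta=0$. Then $R_{p,DPC}^{(2)}=0$ and $R_{c,DPC}^{(2)}=\frac{1}{2}\gamma\bigl((b^2-1)^2P_2^2+2(1+b^2)P_2\bigr)$, which is precisely the broadcast-channel DPC rate $R^{(2)}$ used in~(\ref{eq:5}) when both relays are assumed to know all the messages. Because this identification is a property of the DPC formula alone, it holds for any $b>0$, so the same observation simultaneously covers the $b>1$ and $0<b<1$ cases; this also takes care of the case-4 variant where $R_{p,MAC}^{(2)}$ and $R_{c,MAC}^{(2)}$ are replaced by~(\ref{eq:6(9)})--(\ref{eq:6(10)}), since the DPC sub-region is still present inside the convex closure $\Rmat(R_p^{(2)},R_c^{(2)})$.

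Combining the two, the point $(\alpha,\beta)=(0,0)$ is always feasible in the outer optimization of~(\ref{eq:6}) and realizes the operating point $\min\{R^{(1)},R^{(2)}\}$ of~(\ref{eq:5}); hence the rate region promised by Theorem~\ref{thm:2} (and its case-4 variant) contains the ``decode both without switching'' region, and the optimal symmetric rate with role switching is no smaller. The main obstacle I anticipate is not conceptual but notational: one has to keep careful track of which $a$ and $b$ in the post-switching formulas refer to original versus scaled-channel parameters, so that the $\alpha=\beta=0$ substitution genuinely lands on the intended ``decode-both plus BC-DPC'' operating point rather than on some nearby but strictly smaller point.
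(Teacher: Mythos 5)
Your proposal is correct and follows essentially the same route as the paper: the paper's proof also argues that the no-switching scheme (both relays decode everything, then BC-DPC) is a special case of the role-switched HK scheme obtained by allocating zero power to the private messages, i.e.\ exactly the $\alpha=\beta=0$ point you identify, a reduction the paper already records explicitly in the remark following Theorem~\ref{thm:2}. Your version merely carries out the formula identifications in more detail than the paper's prose argument.
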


\begin{proof} If the two relays do not switch roles, for strong
interference in the first hop, the optimal scheme is for both the
two relays to decode all the messages of the two users. Then, the
optimal scheme for the second hop is to use DPC scheme as in the
MIMO broadcast channel. However, these schemes are special cases
of the transmission schemes if we switch the roles of the two
relays and apply the HK scheme to the first hop(simply by
allocating zero power to the private messages after rate
splitting). Therefore, role exchange for the two relay nodes is
always preferred for strong interference in the first hop.
\end{proof}

\subsection{Half Duplex}
If the transmission is conducted in the half duplex fashion, the
two relays cannot receive and transmit at the same time. In this
case, the transmission in the two hops cannot proceed
simultaneously. When transmitting in the first hop, the relays are
in the listening mode and the two users $T_1, T_2$ transmit their
messages with $N_1$ channel uses to the relays. In the second hop,
after decoding the received messages, the two relays $R_1, R_2$
transmit with $N_2$ channel uses to the two destinations $D_1,
D_2$. Thus, the transmission schemes discussed for the full duplex
case can be directly applied to the half duplex case, only with
the overall rates reduced due to the extra channel uses needed.

Following the schemes proposed for the full duplex mode, we always
do rate splitting and transmit private as well as common messages
in the first hop. Thus, both private and common messages should be
successfully delivered to the destinations in the second hop,
which yields: \bqa R_p^{(1)}N_1\leq R_p^{(2)}N_2\\
R_c^{(1)}N_1\leq R_c^{(2)}N_2 \eqa The minimum channel uses needed
in the second hop is \bqa
N_2=N_1\cdot\max\left(\frac{R_p^{(1)}}{R_p^{(2)}},
\frac{R_c^{(1)}}{R_c^{(2)}}\right) \eqa Therefore, the overall
rate achieved for the entire system is \bqa
R=\frac{(R_p^{(1)}+R_c^{(1)})N_1}{N_1+N_2}=\frac{R_p^{(1)}+R_c^{(1)}}{1+\max\left(\frac{R_p^{(1)}}{R_p^{(2)}},
\frac{R_c^{(1)}}{R_c^{(2)}}\right)}\label{eq:7} \eqa
\begin{theorem}
$R^*=\max R$ is the achievable symmetric rate ($R_1=R_2=R^*$) in
the half duplex two-hop interference network, where $R$ is defined
in (\ref{eq:7}).
\end{theorem}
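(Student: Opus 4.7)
The plan is to set up an explicit two-phase time-division coding scheme and then verify that the achievable end-to-end rate matches expression (\ref{eq:7}). The transmission block is partitioned into a listening phase of $N_1$ channel uses, during which the relays observe the first-hop interference channel while the transmitters send, and a forwarding phase of $N_2$ channel uses, during which the relays act as the transmitters of the second-hop interference channel. Because the two phases occupy disjoint channel uses, their coding schemes can be designed independently, and any of the full-duplex schemes developed in Sections \ref{subsection:case1}--\ref{subsection:case3} (selected according to the regime of $a$ and $b$) can be plugged in.

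Concretely, I would fix the rate-splitting parameters, let each transmitter encode $W_i$ into a private part carrying $N_1 R_p^{(1)}$ bits and a common part carrying $N_1 R_c^{(1)}$ bits, and invoke the first-hop achievability result to conclude that, for $N_1$ large, each relay reliably recovers its intended private message and both common messages. In the forwarding phase, the relays now jointly possess the data needed to act as the sources of the second-hop scheme (DPC, MAC, or a convex combination), and by the second-hop achievability they can reliably convey to each destination, over $N_2$ channel uses, any rate pair $(R_p,R_c)$ in the appropriate region, delivering $N_2 R_p^{(2)}$ private bits and $N_2 R_c^{(2)}$ common bits.

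The feasibility constraints are therefore $N_1 R_p^{(1)} \le N_2 R_p^{(2)}$ and $N_1 R_c^{(1)} \le N_2 R_c^{(2)}$, which are exactly the two inequalities displayed just before (\ref{eq:7}). Solving for the smallest $N_2$ that simultaneously meets both gives
\[
N_2 \;=\; N_1 \cdot \max\!\left(\frac{R_p^{(1)}}{R_p^{(2)}},\,\frac{R_c^{(1)}}{R_c^{(2)}}\right),
\]
and the end-to-end symmetric rate, measured per total channel use $N_1+N_2$, is obtained by dividing the $N_1(R_p^{(1)}+R_c^{(1)})$ delivered bits by $N_1+N_2$, yielding exactly the right-hand side of (\ref{eq:7}). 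Maximizing over all admissible choices of power splits, Han-Kobayashi parameters, and the DPC/MAC time-sharing yields $R^{\ast}$.

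The only subtle step is ensuring that the second-hop achievability result is applied correctly: the relays must treat the common messages from the first hop as the common messages of the second-hop code, and the private messages as the second-hop private (possibly further-split) messages, so that the joint coding assumptions used in Theorems~\ref{thm:1}--\ref{thm:2} still hold. Once this bookkeeping is in place, error-probability bounds follow from the union of the two hops' error events, each of which vanishes as $N_1,N_2\to\infty$, and the penalty from the first and last transmission blocks is negligible under the block-Markov argument already invoked in Section III.
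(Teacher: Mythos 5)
Your proposal is correct and follows essentially the same route as the paper: the paper's own justification is precisely the two-phase time-division argument with the feasibility constraints $R_p^{(1)}N_1\leq R_p^{(2)}N_2$ and $R_c^{(1)}N_1\leq R_c^{(2)}N_2$, the minimal $N_2$, and the resulting per-channel-use rate, which you reproduce and merely flesh out with the error-event bookkeeping. (The only cosmetic difference is that your appeal to a block-Markov argument is unnecessary here, since the half-duplex scheme runs the two phases sequentially within a single block.)
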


\section{Amplify and Forward}
In this section, we focus on the transmission rates achieved by
amplify and forward relaying. We show that this scheme can
outperform decode and forward relaying under certain conditions.

For amplify and forward relaying, we still focus on the symmetric
channel model as defined in (\ref{eq:model5})-(\ref{eq:model8}).
%In order to make our analysis and expressions simpler, we assume
%that the two power constraints of the first hop and the second hop
%satisfy: \bqa P_2=(1+a^2)P_1+1. \label{eq:power}\eqa

\subsection{In-phase Relaying} \label{subsection:in-phase}
We first analyze the achievable rates for the so-called in-phase
transmission, where the two relays simply scale their received
signals with the same polarity. This is the usual amplify and
forward scheme and we emphasize in-phase here to contrast with the
out-of-phase approach described later. In the first hop, the
received signals at the
relays are \bqa Y_1&=&X_1+aX_2+Z_1\\
Y_2&=&aX_1+X_2+Z_2 \eqa If they use the full power for amplifying
in
the second hop, we have \bqa X_3&=&cY_1\\
X_4&=&cY_2 \eqa where $c=\sqrt{\frac{P_2}{(1+a^2)P_1+1}}$. Therefore \bqa Y_3&=&cY_1+bcY_2+Z_3\\
Y_4&=&bcY_1+cY_2+Z_4 \eqa after scaling, \bqa
\!\!\!Y_3^{'}&=&(1+ab)X_1+(a+b)X_2+Z_1+bZ_2+Z_3/c\label{eq:model9}\\
\!\!\!Y_4^{'}&=&(a+b)X_1+(1+ab)X_2+bZ_1+Z_2+Z_4/c\label{eq:model10}
\eqa Due to the fact that receivers $D_1$ and $D_2$ do not talk to
each other, we can modify the model in
(\ref{eq:model9})-(\ref{eq:model10}) to the following one without
affecting its capacity region: \bqa
Y_3&=&(1+ab)X_1+(a+b)X_2+Z_3^{'}\label{eq:model11}\\
Y_4&=&(a+b)X_1+(1+ab)X_2+Z_4^{'}\label{eq:model12} \eqa where
$Z_3, Z_4\sim N(0, 1+b^2+1/c^2)$ are independent variables.

\subsubsection{Strong Interference}
It is clear that the model in
(\ref{eq:model11})-(\ref{eq:model12}) will be a strong
interference channel if $a+b>1+ab$, i.e., \bqa
\{a<1,b>1\}\hspace{.3cm} \mbox{or} \hspace{.3cm}\{a>1, b<1\} \eqa
For this model, the optimal scheme is for the two receivers to
decode both users messages, and the capacity region is known as
\bqa R_1&\leq&\gamma\left(\frac{(1+ab)^2P_1}{1+b^2+1/c^2}\right)\\
R_2&\leq&\gamma\left(\frac{(1+ab)^2P_1}{1+b^2+1/c^2}\right)\\
R_1+R_2&\leq&\gamma\left(\frac{((1+ab)^2+(a+b)^2)P_1}{1+b^2+1/c^2}\right)
\eqa Thus, the symmetric achievable rate $(R_1=R_2=R)$ is \beq
\begin{array}{ll}
R=\min\left\{\gamma\left(\frac{(1+ab)^2P_1}{1+b^2+1/c^2}\right),
\frac{1}{2}\gamma\left(\frac{((1+ab)^2+(a+b)^2)P_1}{1+b^2+1/c^2}\right)\right\}\end{array}\label{eq:9}
\eeq

\subsubsection{Weak Interference}
On the other hand, if $a+b<1+ab$, i.e., \bqa
\{a>1,b>1\}\hspace{.3cm} \mbox{or} \hspace{.3cm}\{a<1, b<1\} \eqa
the model (\ref{eq:model11})-(\ref{eq:model12}) becomes a weak
interference channel, for which the Han-Kobayashi's scheme is the
best known scheme. Similar to the analysis in
\ref{subsection:case1}, the symmetric private rate and common rate
are \bqa R_p&\leq& \gamma\left(\frac{(1+ab)^2\alpha
P_1}{(a+b)^2\alpha
P_1+b^2+1+1/c^2}\right)\label{eq:10}\\
R_c&\leq&\min\left\{\gamma\left(\frac{(a+b)^2\bar{\alpha}P_1}{\sigma_1^2}\right),\frac{1}{2}\gamma\left(\frac{\sigma_2^2}{\sigma_1^2}\right)\right\}
\label{eq:11}\eqa where $\sigma_1^2=((1+ab)^2+(a+b)^2)\alpha
P_1+b^2+1+1/c^2$ and
$\sigma_2^2=((1+ab)^2+(a+b)^2)\bar{\alpha}P_1$. The symmetric rate
for the whole system is \bqa R=\max_{\alpha\in [0,1]}R_p+R_c. \eqa

It is interesting to note that for the method of amplify and
forward relaying, the analysis also shows the four parameter
regimes can actually be divided into two categories, in the sense
of transmission and decoding schemes, where $(a<1, b<1)$ and
$(a>1, b>1)$ belong to one category, and $(a<1, b>1)$ and $(a>1,
b<1)$ belong to the other category. This coincides with the
analysis of the decode and forward relaying in the previous
section.

\subsection{Out-of-phase Relaying}
Besides in-phase relaying, the two relays can also purposely make
the relayed signal out of phase by exactly $180^o$, i.e., change
the sign of the relay output. We show in this subsection that this
scheme can have very nice performance under certain conditions.

Again, by using full power at the two relays and making the
relayed signals out of phase by $180^o$, we have \bqa X_3&=&-cY_1\\
X_4&=&cY_2 \eqa where $c=\sqrt{\frac{P_2}{(1+a^2)P_1+1}}$. Therefore, \bqa Y_3&=&-cY_1+bcY_2+Z_3\\
Y_4&=&-bcY_1+cY_2+Z_4 \eqa which, after scaling, is \bqa
\!\!\!Y_3^{'}&=&(ab-1)X_1+(b-a)X_2-Z_1+bZ_2+Z_3/c\label{eq:model13}\\
\!\!\!Y_4^{'}&=&(a-b)X_1-(ab-1)X_2-bZ_1+Z_2+Z_4/c\label{eq:model14}
 \eqa
Since $D_1$ and $D_2$ cannot talk to each other, we can modify the
model (\ref{eq:model13})-(\ref{eq:model14}) to the following model
with the same capacity region: \bqa
Y_3&=&(ab-1)X_1+(b-a)X_2+Z_3^{'}\label{eq:model15}\\
Y_4&=&(a-b)X_1+(1-ab)X_2+Z_4^{'}\label{eq:model16} \eqa where
$Z_3, Z_4\sim N(0, 1+b^2+1/c^2)$ are independent random noises.

\subsubsection{Strong Interference}
For model (\ref{eq:model15})-(\ref{eq:model16}), this becomes a
strong interference channel if $|ab-1|<|b-a|$, i.e., \bqa
\{a<1,b>1\}\hspace{.3cm} \mbox{or} \hspace{.3cm}\{a>1, b<1\}.
\label{condition:1}\eqa This is exactly the same condition as the
strong interference case in section \ref{subsection:in-phase}.
Similar to the analysis of \ref{subsection:in-phase}, we can
express the symmetric rate for the strong interference case as
\beq
\begin{array}{ll}\label{eq:8}
R=\min\left\{\gamma\left(\frac{(1-ab)^2P_1}{1+b^2+1/c^2}\right),
\frac{1}{2}\gamma\left(\frac{((1-ab)^2+(a-b)^2)P_1}{1+b^2+1/c^2}\right)\right\}.
\end{array}\eeq
Obviously, the rate in (\ref{eq:8}) is less than that in
(\ref{eq:9}). Thus, for amplify and forward relaying, under
condition (\ref{condition:1}), we should employ in-phase relaying
at the two relays.

\subsubsection{Weak Interference}
When $|ab-1|>|b-a|$, the model
(\ref{eq:model15})-(\ref{eq:model16}) becomes a weak interference
channel, i.e., \bqa \{a>1,b>1\}\hspace{.3cm} \mbox{or}
\hspace{.3cm}\{a<1, b<1\} \eqa which is also consistent with the
condition of the weak interference case in section
\ref{subsection:in-phase}. Using Han-Kobayashi's scheme, we get
the symmetric private rate and common rate \bqa R_p&\leq&
\gamma\left(\frac{(1-ab)^2\alpha P_1}{(a-b)^2\alpha
P_1+b^2+1+1/c^2}\right)\label{eq:12}\\
R_c&\leq&\min\left\{\gamma\left(\frac{(1-ab)^2\bar{\alpha}P_1}{\sigma_1^2}\right),\frac{1}{2}\gamma\left(\frac{\sigma_2^2}{\sigma_1^2}\right)\right\}
\label{eq:13}\eqa where $\sigma_1^2=((1-ab)^2+(a-b)^2)\alpha
P_1+b^2+1+1/c^2$ and
$\sigma_2^2=((1-ab)^2+(a-b)^2)\bar{\alpha}P_1$.

Comparing rates of (\ref{eq:12})-(\ref{eq:13}) and that of
(\ref{eq:10})-(\ref{eq:11}), it can be easily verified that when
$a=b$ and $ab>>1$(or $ab<<1$), (\ref{eq:12})-(\ref{eq:13}) will
outperform (\ref{eq:10})-(\ref{eq:11}).

If we consider this two hop interference channel network as two
water pipes cascaded with each hop as one pipe, it is very nature
to think the total throughput of the entire system should be
bounded by the capacities of both pipes (e.g., min cut), which is
exactly the case for the decode and forward relaying. However, for
the amplify and forward relaying, we show that this natural
analogy is not valid, i.e., the total throughput can be larger
than the capacities of both ``pipes".

If $a=b$, the model (\ref{eq:model15})-(\ref{eq:model16}) becomes
two parallel AWGN channels and the rates for both channels are the
same: \bqa R=\gamma\left(\frac{(1-a^2)^2P_1}{1+a^2+1/c^2}\right)
=\gamma\left(\frac{(1-a^2)^2P_1P_2}{(1+a^2)(P_1+P_2)+1}\right)\label{eq:14}
\eqa

If $a=b>1$, according to Theorem \ref{thm:role_switching}, the
capacity of each of the two hops is always less than or equal to
that of the transformed channel where we switch the roles of the
two relays, thus converting the strong interference into weak
interference for both hops. Therefore, without loss of generality,
we only consider the case when $a=b<1$. For the interference
channel of the first hop, by
\cite{Shang&Kramer&Chen:09IT,Motahari&Khandani:09IT,Annapureddy&Veeravalli:09IT},
the channel has ``noisy interference" when \bqa a(a^2P_1+1)\leq
\frac{1}{2} \mbox{i.e.,} P_1\leq
\frac{1}{a^2}\left(\frac{1}{2a}-1\right)\label{eq:16} \eqa Under
noisy interference, we know the sum rate capacity of the channel
\cite{Shang&Kramer&Chen:09IT,Motahari&Khandani:09IT,Annapureddy&Veeravalli:09IT},
which is achieved by treating the other user's signal as pure
noise. Thus, the corresponding symmetric capacity is \bqa C_1=
\gamma\left(\frac{P_1}{1+a^2P_1}\right)\label{eq:15} \eqa

%Comparing (\ref{eq:14}) and (\ref{eq:15}), we obtain that if
%$P_1>\frac{1}{a^2}\left(\frac{a^2+2}{(1-a^2)^2}-1\right)$,
%(\ref{eq:15}) will be outperformed by (\ref{eq:14}). Combining
%(\ref{eq:16}), we find the total throughput $R$ will exceed the
%first hop's capacity $C_1$ if \bqa
%\frac{1}{a^2}\left(\frac{a^2+2}{(1-a^2)^2}-1\right)\leq P_1\leq
%\frac{1}{a^2}\left(\frac{1}{2a}-1\right)\label{eq:power_cond1}
%\eqa Going through similar analysis for the second hop and noting
%the power constraint of (\ref{eq:power}), we obtain a sufficient
%condition for the total transmission rate $R$ to exceed the second
%hop's capacity $C_2$: \bqa \frac{a^2+2}{a^2(1-a^2)^2}\leq P_1\leq
%\frac{1}{1+a^2}\left(\frac{1}{a^2}(\frac{1}{2a}-1)-1\right)\label{eq:power_cond2}
%\eqa So, the condition for the total throughput to exceed the
%capacities of both hops is the intersection of conditions
%(\ref{eq:power_cond1}) and (\ref{eq:power_cond2}), which is
%(\ref{eq:power_cond2}) itself. It is easy to verify that when $a$
%is close to $0$, such $P_1$ do exist.

If $P_2=P_1$, the symmetric capacity of the second hop is also
$C_2=C_1=\gamma\left(\frac{P_1}{1+a^2P_1}\right)$. In order for
the rate (\ref{eq:14}) to exceed the capacity of both hops for
$P_1=P_2$, i.e., \bqa
\gamma\left(\frac{(1-a^2)^2P_1P_2}{(1+a^2)(P_1+P_2)+1}\right)>\gamma\left(\frac{P_1}{1+a^2P_1}\right)
\eqa we need to satisfy \bqa
P_1>\frac{1+4a^2-a^4+\sqrt{(1+4a^2-a^4)^2+4a^2(1-a^2)^2}}{2a^2(1-a^2)^2}
\eqa Combining (\ref{eq:16}), we get \beq
\begin{array}{ll}
\frac{1+4a^2-a^4+\sqrt{(1+4a^2-a^4)^2+4a^2(1-a^2)^2}}{2a^2(1-a^2)^2}<P_1<
\frac{1}{a^2}\left(\frac{1}{2a}-1\right)\label{eq:17}
\end{array} \eeq
We can easily check that when $a$ is close to 0, the lower bound
of (\ref{eq:17}) is $O(\frac{1}{a^2})$ and the upper  bound of
(\ref{eq:17}) is $O(\frac{1}{a^3})$, which indicates that when $a$
is close to 0, such $P_1$ does exist. For example, when $a=0.15$,
the bound in (\ref{eq:17}) becomes $51.6<P_1<103.7$.

The above example is for $a=b<1$. Similarly, for $a=b>1$, due to
the previous analysis that these two cases are essentially
identical (by switching the roles of the two relays), it can be
verified that when $a=b>>1$, the transmission rate for the whole
system can also exceed the capacity of each individual
interference channel. The details are omitted here.

Although the above results are obtained for $a=b$, we comment that
even when $a\neq b$ but are close, one can still find parameter
regimes for which the out-of-phase scheme is favored, i.e., has a
larger symmetric rate.

\section{Numerical Examples}
For both decode-and-forward relaying and amplify-and-forward
relaying, when the the first hop has strong interference, i.e.,
$a>1$, it is always preferred to switch the roles of the two
relays and convert the channel into a weak interference channel.
Without loss of generality, we only focus on the weak interference
case of the first hop, i.e., $a<1$. First, we compare the effect
of the two schemes in the second hop, namely DPC scheme and MAC
scheme, under different channel parameters for the
decode-and-forward relaying.
\begin{figure}[htp]
\begin{tabular}{cc}
\leavevmode \epsfxsize=1.8in \epsfysize=1.3in
\epsfbox{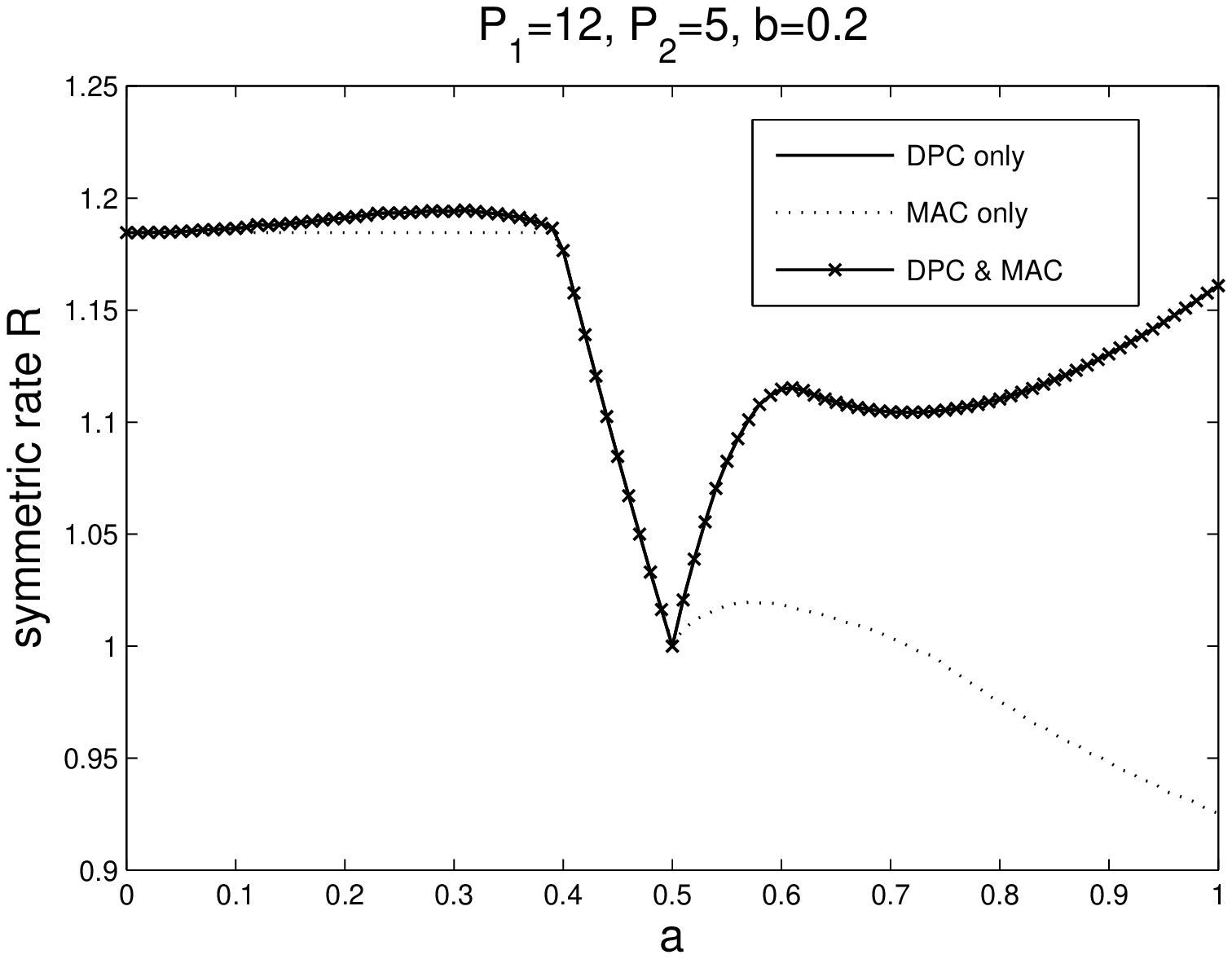}& \leavevmode \epsfxsize=1.8in
\epsfysize=1.3in \epsfbox{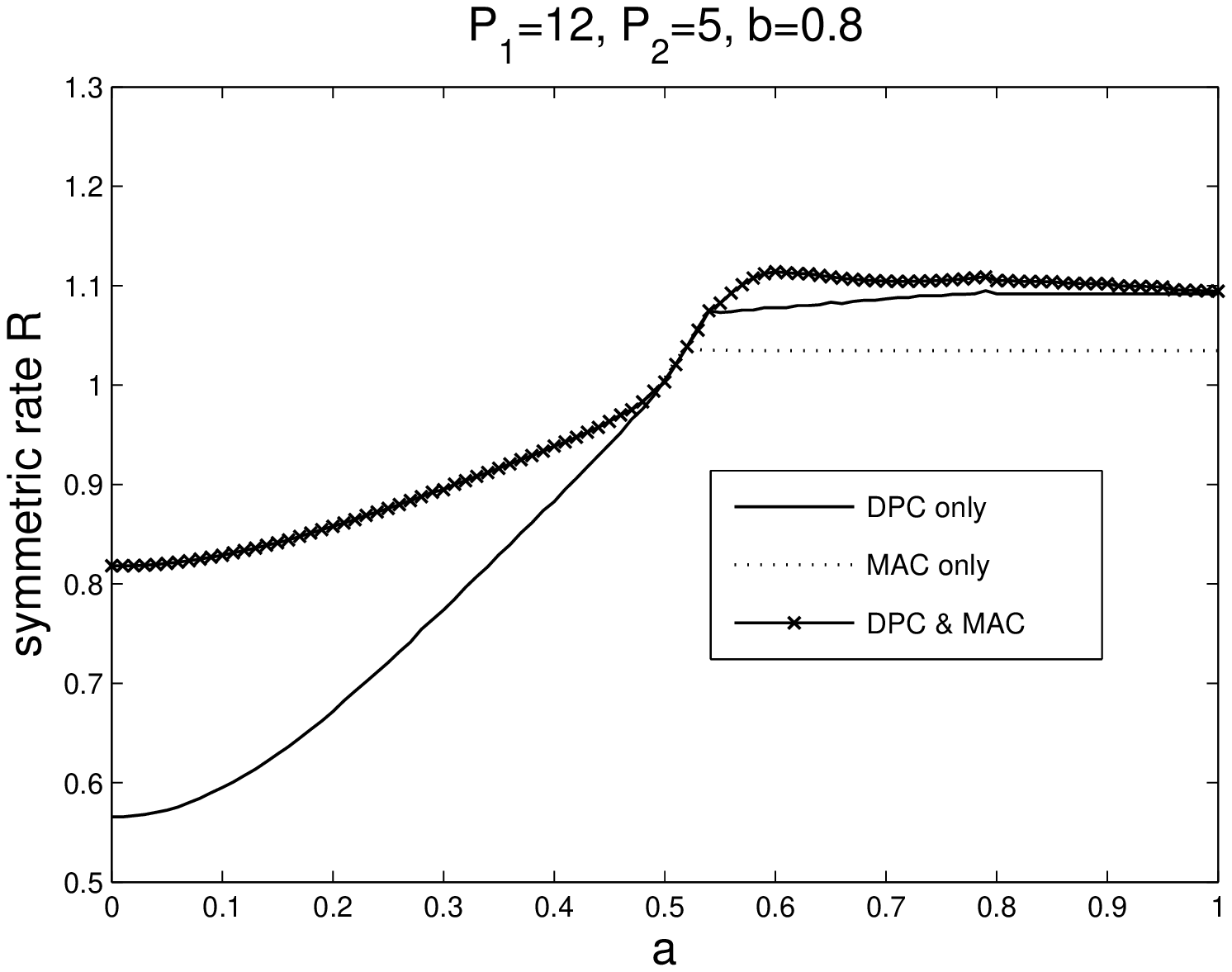}\\
(a)&(b)\\
\leavevmode \epsfxsize=1.8in \epsfysize=1.3in \epsfbox{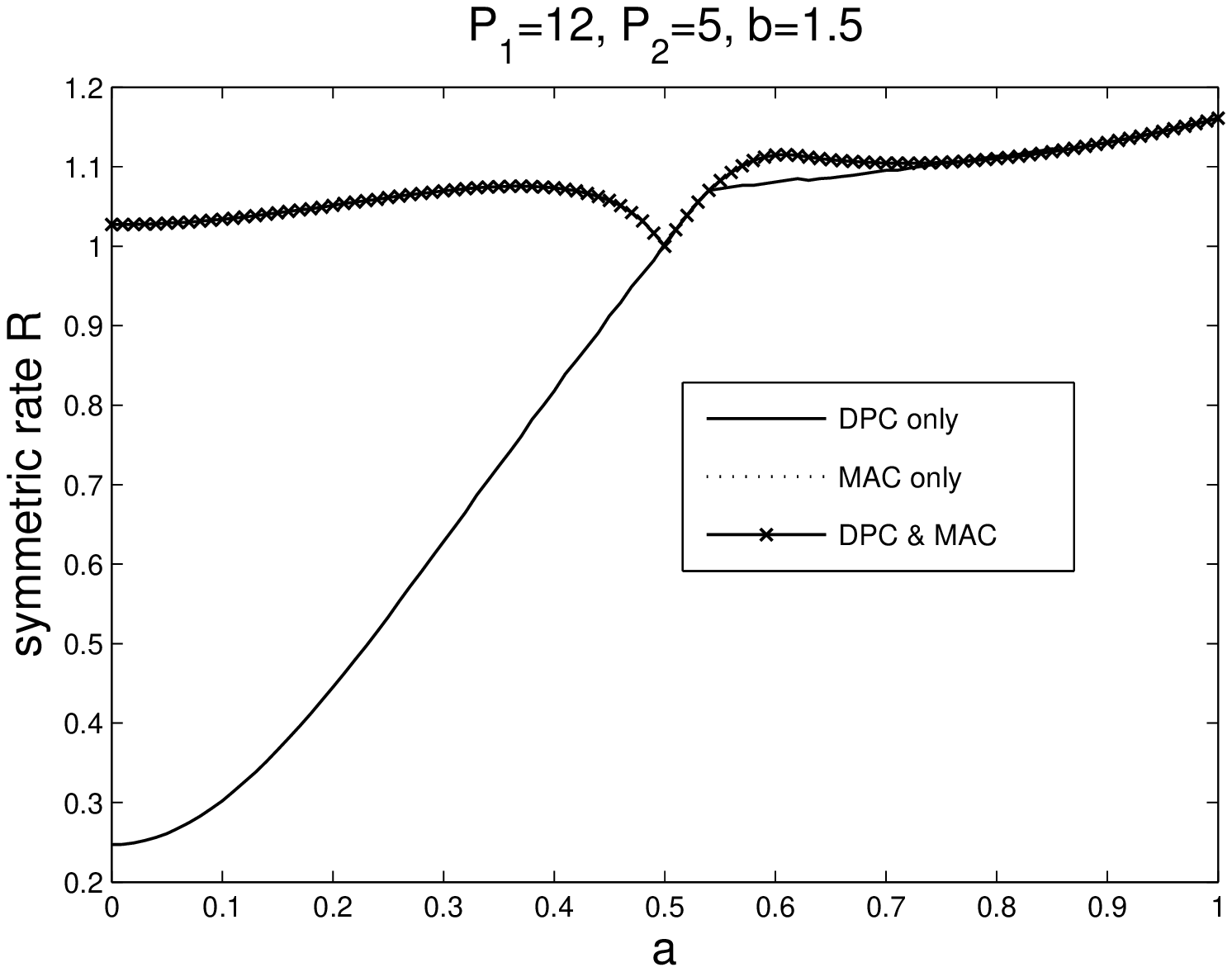}\\
(c)
\end{tabular}
\caption{\label{fig:DPC_vs_MAC} Comparison of DPC scheme and MAC
scheme in the second hop for the decode-and-forward relaying.}
\end{figure}

Fig. \ref{fig:DPC_vs_MAC} (a) shows that when the interference
gain of the second hop $b$ is very small, the DPC scheme is
dominating for $a\in [0,1]$ and the symmetric rate for combining
DPC and MAC will coincide with that of DPC scheme only. The
difference between DPC and MAC becomes dramatic when $a>0.5$. That
is because in this regime, the HK scheme will produce significant
amount of common information in the first hop, and MAC scheme
requires the common information to be decoded by both receivers,
which negatively affects the total rate since $b$ is small at the
second hop. However, when $b$ gets larger, as shown in (b), MAC
scheme will beat DPC for $a<0.5$ but will be outperformed by DPC
for $a>0.5$. Since for $a<0.5$, there is significant amount of
private messages produced by HK scheme in the first hop, which
will be treated as noise in the DPC scheme, but will be partially
decoded in the MAC scheme, thus MAC will perform better. However
for $a>0.5$, the common messages from the first hop dominates.
Since DPC scheme can cancel the interference effect of other
user's common messages, this advantage beats the MAC scheme where
the common messages need to be decoded by both receivers when $b$
is not strong enough $(b=0.8)$. Note that the combination of DPC
and MAC will outperform both of the individual schemes for $a>0.5$
because of the time sharing effect of the two rate regions. When
$b$ gets strong enough as in (c), the MAC scheme will far
outperform DPC when $a$ is small, but will be close to DPC when
$a$ gets larger.

Next, we show in Fig. \ref{fig:DF_vs_AF} the comparison of
decode-and-forward relaying and amplify-and-forward relaying (both
in phase and $180^o$ out of phase) in low SNR regime.

\begin{figure}[htp]
\begin{tabular}{cc}
\leavevmode \epsfxsize=1.8in \epsfysize=1.3in
\epsfbox{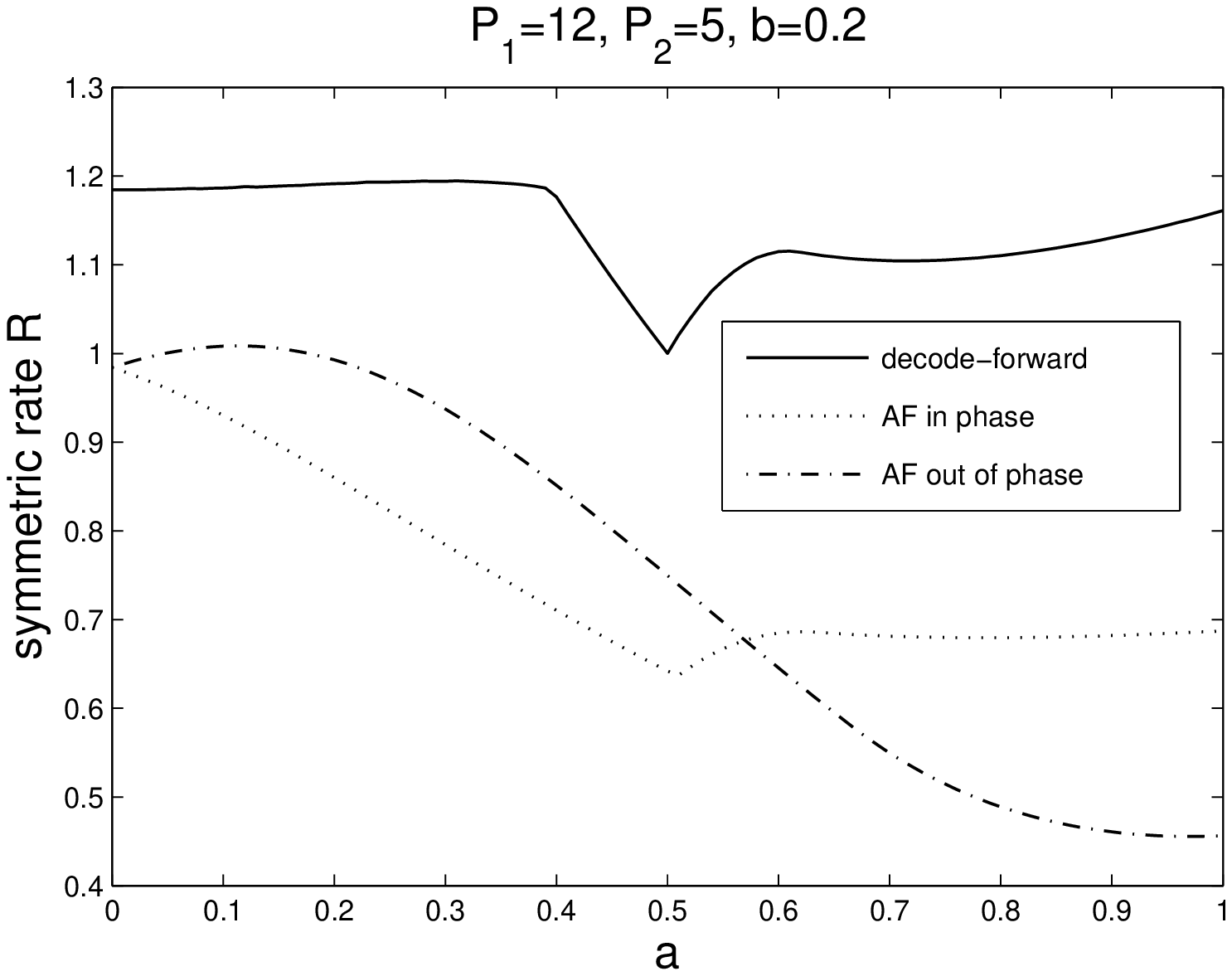}& \leavevmode \epsfxsize=1.8in
\epsfysize=1.3in \epsfbox{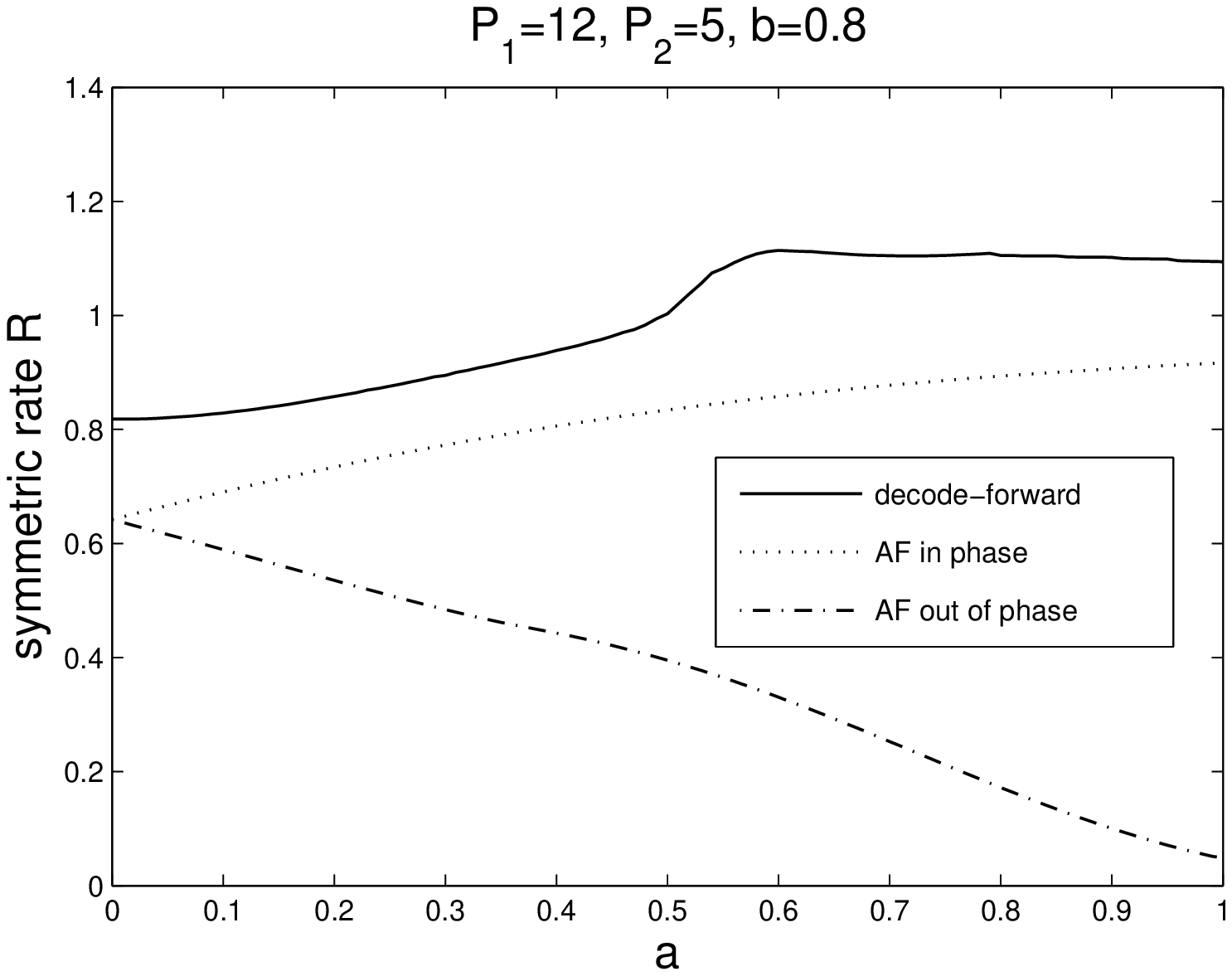}\\
(a)&(b)\\
\leavevmode \epsfxsize=1.8in \epsfysize=1.3in \epsfbox{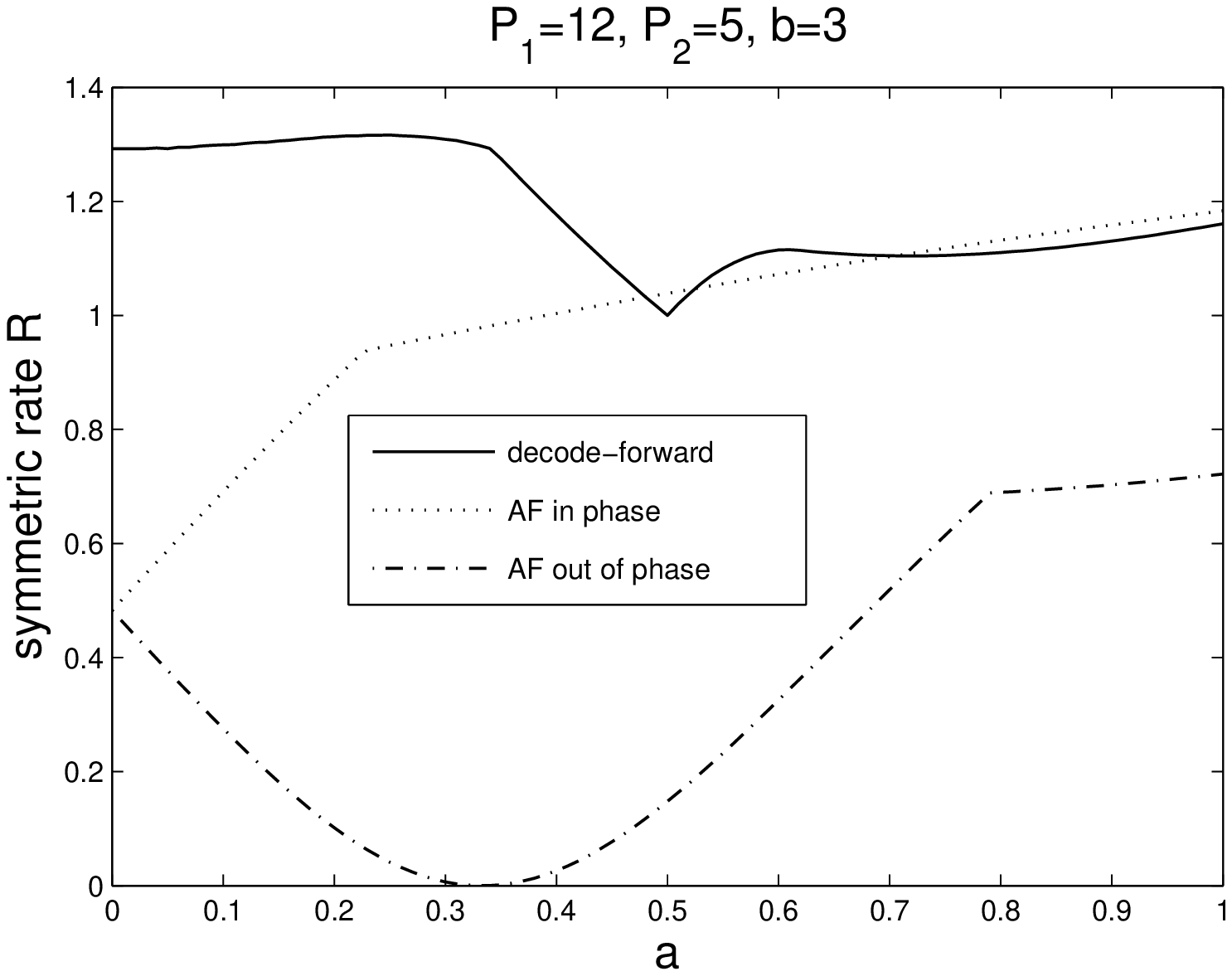}\\
(c)
\end{tabular}
\caption{\label{fig:DF_vs_AF} Comparison of decode-and-forward
relaying and amplify-and-forward relaying in low SNR regime}
\end{figure}
It can be seen that for the low SNR regime, when $b$ is small, the
amplify-and-forward relaying scheme (for both in phase and $180^o$
out of phase) will always be outperformed by the
decode-and-forward scheme, as shown in (a) and (b). When $b$ gets
strong enough, as shown in (c), the in phase amplify-and-forward
relaying may outperform, but not by much, the decode-and-forward
scheme when $a$ is close to 1. In other words, for the low SNR
regime, decode-and-forward scheme is preferred over
amplify-and-forward scheme. However, at high SNR regime, it is a
different story.

\begin{figure}[htp]
\begin{tabular}{cc}
\leavevmode \epsfxsize=1.8in \epsfysize=1.3in
\epsfbox{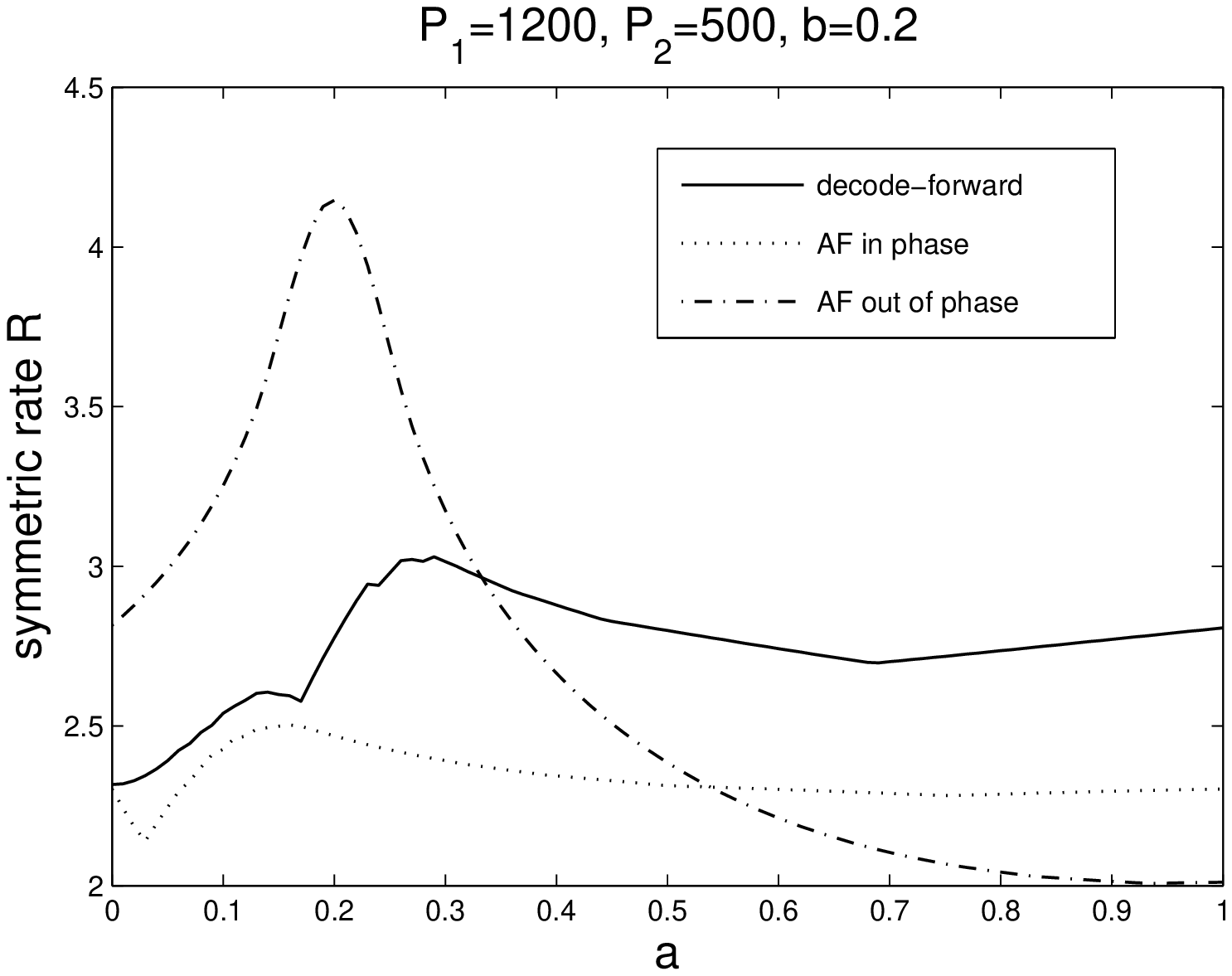} & \leavevmode \epsfxsize=1.8in
\epsfysize=1.3in \epsfbox{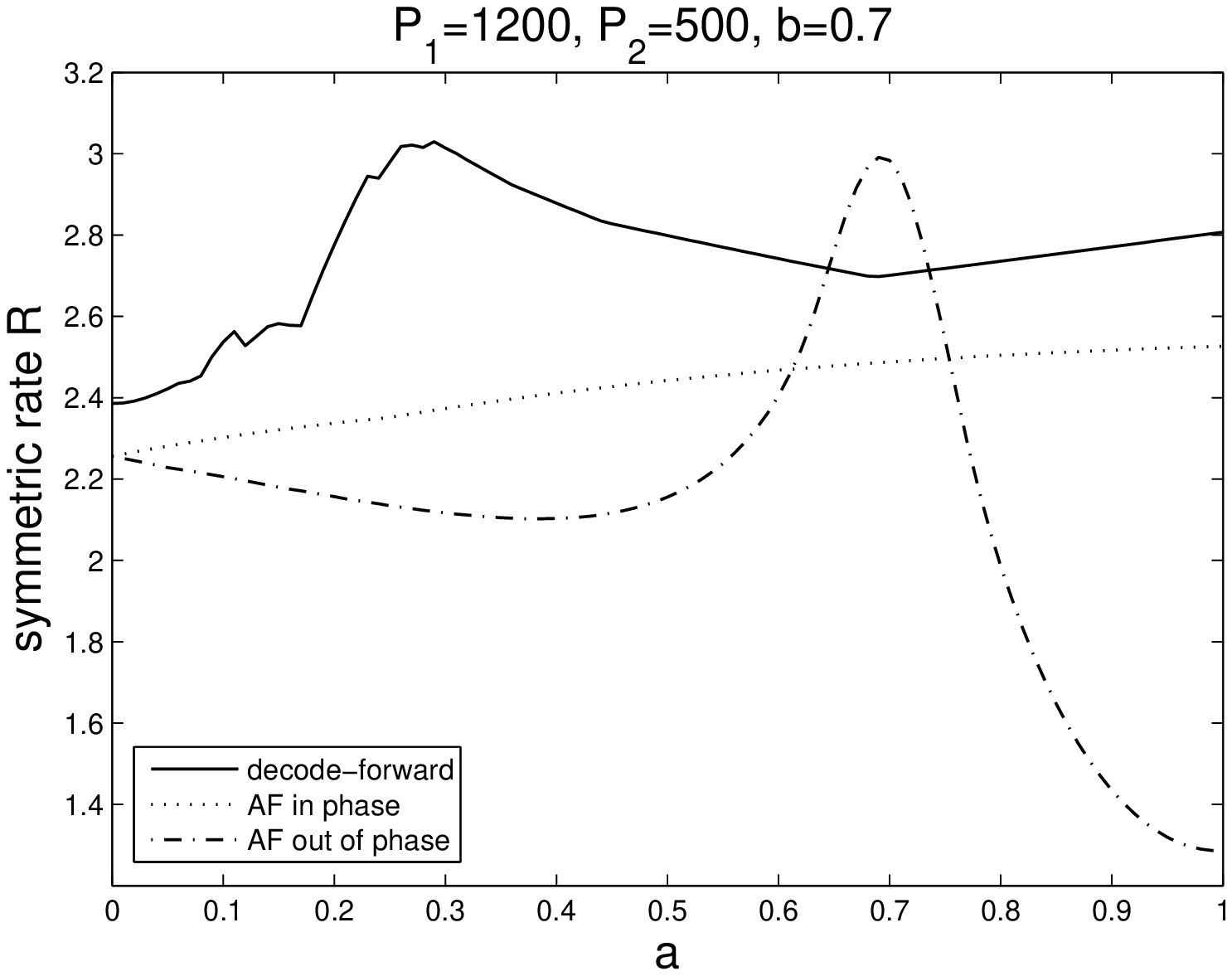}\\
(a)&(b)\\
\leavevmode \epsfxsize=1.8in \epsfysize=1.3in \epsfbox{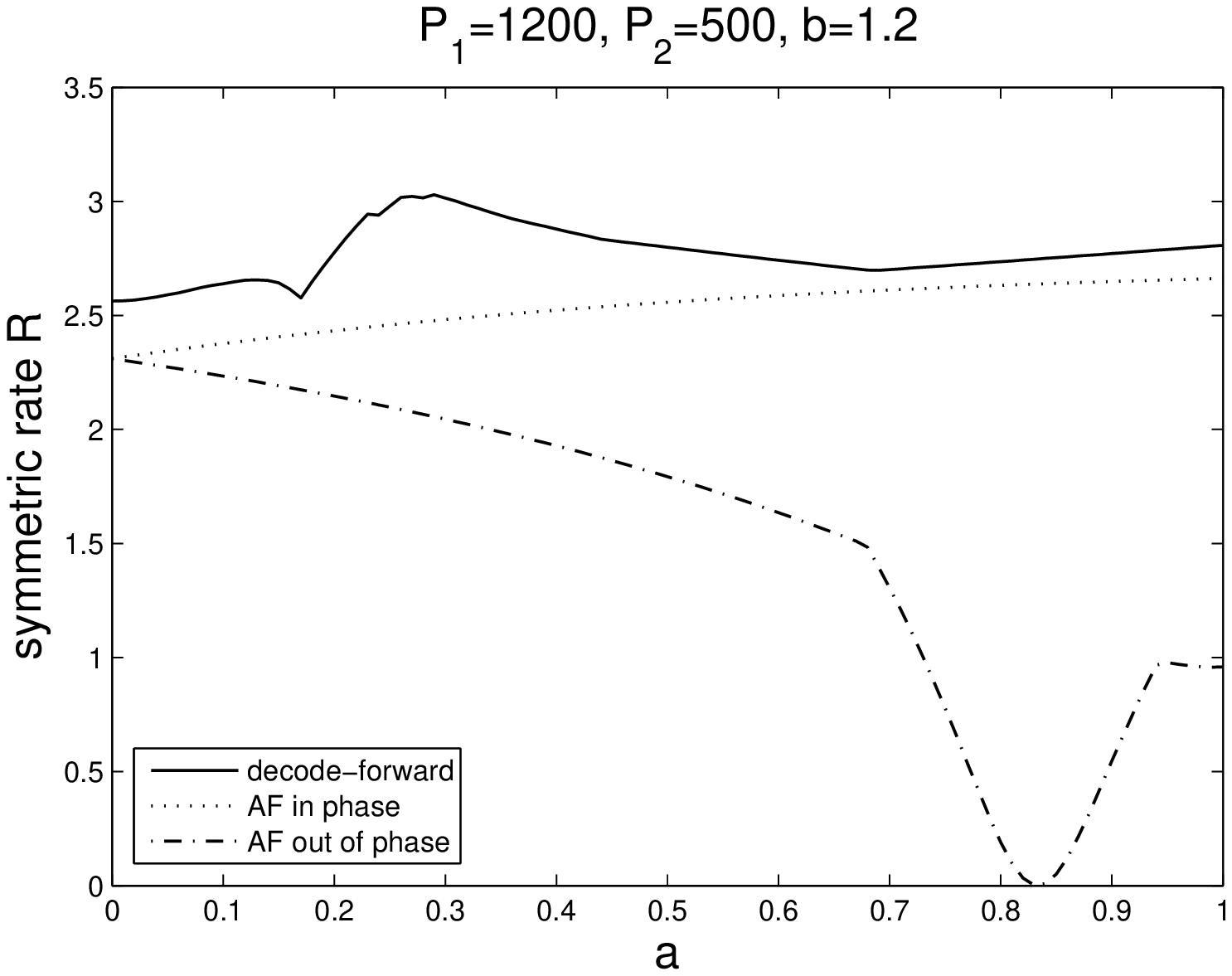}\\
(c)
\end{tabular}
\caption{\label{fig:DF_vs_AF2} Comparison of decode-and-forward
relaying and amplify-and-forward relaying in high SNR regime}
\end{figure}

As shown in Fig. \ref{fig:DF_vs_AF2}, in the high SNR regime, when
$b<1$, the performance of amplify-and-forward relaying with
$180^o$ out of phase is the best when $a$ is close to $b$. This is
because when $a=b$, the channel becomes two parallel AWGN
channels, which has the best performance under the high SNR
regime. However, away from the peak of $a=b$, the
amplify-and-forward relaying with $180^o$ out of phase is still
the worst. When $b\geq 1$, since $a\in [0,1]$, the peak of $a=b$
does not exist any more, thus, the performance of the out of phase
amplify-and-forward relaying becomes the worst for all values of
$a$. In this case, the decode-and-forward scheme remains the best
of all.

\section{Conclusion}
In this paper, we investigated and compared coding schemes for the
two hop interference network under various channel parameters
regimes. Our analysis shows that if the first hop has strong
interference, i.e., $a>1$, it is always beneficial to switch the
roles of the two relays so that the channel is converted to a weak
interference channel with interference gain of $1/a$, and the
strength of the second hop is also changed accordingly.

For the decode-and-forward relaying, the DPC scheme and MAC scheme
are both needed for the second hop. The combination of the two may
sometimes outperform both of the individual schemes due to the
time sharing effect. Generally however, DPC scheme dominates when
$b$ is small and MAC scheme dominates when $b$ is large.

The comparison of decode-and-forward relaying and
amplify-and-forward relaying showed that decode-and-forward
relaying always has better performance except when $a$ is close to
$b$ in the high SNR regime.

\bibliographystyle{C://localtexmf/caoyibib/IEEEbib}
\bibliography{C://localtexmf/caoyibib/Journal,C://localtexmf/caoyibib/Conf,C://localtexmf/caoyibib/Book}

\begin{thebibliography}{1}

\bibitem{Akyildiz:05CN}
I.~Akyildiz, X.~Wang, and W.~Wang,
\newblock ``Wireless mesh networks: a survey,''
\newblock {\em Computer Networks}, vol. Vol.47, pp. 445--487, 2005.

\bibitem{Simeone_etal:07Allerton}
O.~Simeone, O.Somekh, Y.~Bar-Ness, H.~V. Poor, and S.~Shamai,
\newblock ``{Capacity of linear two-hop mesh networks with rate splitting,
  decode-and-forward relaying and cooperation},''
\newblock in {\em Proc. 45th Annual Allerton Conference on Communication,
  Control and Computing}, 2007.

\bibitem{Han&Kobayashi:81IT}
T.~Han and K.~Kobayashi,
\newblock ``{A new achievable rate region for the interference channel},''
\newblock {\em IEEE Trans. on Information Theory}, vol. IT-27, pp. 49--60, Jan.
  1981.

\bibitem{Thejaswi_etal:07Allerton}
C.~Thejaswi, A.~Bennatan, J.~Zhang, R.~Calderbank, and D.~Cochran,
\newblock ``{Rate-achievability strategies for two-hop interference flows},''
\newblock in {\em Proc. 46th Annual Allerton Conference on Communication,
  Control and Computing}, Sept. 2008.

\bibitem{Shang&Kramer&Chen:09IT}
X.~Shang, G.~Kramer, and B.~Chen,
\newblock ``{A new outer bound and the noisy-interference sum-rate capacity for
  Gaussian interference channels},''
\newblock {\em IEEE Trans. Inform. Theory}, vol. 55, pp. 689--699, Feb. 2009.

\bibitem{Motahari&Khandani:09IT}
A.~S. Motahari and A.~K. Khandani,
\newblock ``{Capacity bounds for the Gaussian interference channel},''
\newblock {\em IEEE Trans. Inform. Theory}, vol. 55, pp. 620--643, Feb. 2009.

\bibitem{Annapureddy&Veeravalli:09IT}
V.~S. Annapureddy and V.~V. Veeravalli,
\newblock ``{Gaussian interference networks: Sum capacity in the low
  interference regime and new outer bounds on the capacity region},''
\newblock {\em submitted to IEEE Trans. Inform. Theory}, Feb. 2008.

\end{thebibliography}

\end{document}